\newtheorem{theorem}{Theorem}
\newtheorem{remark}{Remark}
\newtheorem{example}[theorem]{Example}
\newtheorem{lemma}[theorem]{Lemma}
\newtheorem{definition}[theorem]{Definition}
\newtheorem{proposition}[theorem]{Proposition}
\newtheorem{assumption}{Assumption}
\newtheorem{problem} {Problem}
\title{\LARGE\bf   Data-Driven Distributed Output Synchronization of Heterogeneous Discrete-Time Multi-Agent Systems}
\author{Giulio Fattore  and Maria Elena Valcher
\thanks{G. Fattore and  M.E. Valcher are with the Dipartimento di Ingegneria dell'Informazione,
 Universit\`a di Padova,
    via Gradenigo 6B, 35131 Padova, Italy, e-mail:  \texttt{giulio.fattore@phd.unipd.it, meme@dei.unipd.it}
    }
   }
\date{empty}
\begin{document}
\maketitle
\thispagestyle{empty}
\pagestyle{empty}

\begin{abstract} 
In this paper, we assume that an autonomous exosystem generates a reference output, and we consider
the problem of designing a distributed data-driven
control law 
for a family of discrete-time heterogeneous LTI agents, connected through a directed graph,  in order to synchronize the agents' outputs to the reference one. 
The agents of the network are split into two categories: leaders, with direct access to the exosystem output, and followers, that only  receive information from their neighbors. All agents aim to achieve output synchronization by means of a state feedback that makes use of  their own states 
as well as of
an estimate of the exogenous system state, 
provided  by an internal state observer. 
Such observer has a different structure for leaders and   followers.
 Necessary and sufficient conditions for the existence of a solution are first derived in the model-based set-up and then   in a data-driven context. 
An example illustrates both the implementation procedure and the performance of the proposed  approach.
\end{abstract}
\section{Introduction}\label{sec.intro}

 Output regulation and output synchronization are fundamental control problems, that have attracted the interest of the researchers  since the early seventies.
The solutions to these problems, which are now part of the background of all control engineers, rely on the Internal Model Principle and have their foundations in some milestone papers from giants of the control community
\cite{Davison_TAC76,fw75,fw76,Francis_SIAM77}.

In the early years of this century, there has been a shift from centralized control to distributed control mainly driven by the growing complexity of modern systems, and the need for greater scalability, fault tolerance, and real-time responsiveness.
Distributed control enables decentralized decision-making, improved resilience, and better handling of large-scale, networked environments such as power grids, autonomous vehicles, and industrial automation. This paradigm shift has  triggered a surge of research aiming to provide distributed solutions to classic control problems, including output regulation and synchronization.
Reference \cite{Su_TAC2012} by Su and Huang  pioneered the research on this topic, first exploring how cooperation among agents may allow all of them to synchronize their outputs to a desired  reference trajectory generated by an exosystem.  This first contribution
stimulated a long list of papers
on this topic (see \cite{Cai_AUT2017,Chen_TSMCS2022,Huang_TAC2017,Huo_TCNS2023,Kiumarsi_AUT2017,Modares_AUT2016}, to cite a few). The interested reader is referred to \cite{Chen_AUT2023} for an extensive review of the literature.

In recent years, the control systems community has witnessed another major paradigm shift: from traditional model-based approaches to data-driven methods. This transition is driven by the growing availability of large-scale data, the  advancements in machine learning, the increasingly higher performance of computational methods and lower costs of storage. 
This shift is reflected in a new trend of research, aiming to explore how to 
achieve output synchronization for multi-agent LTI systems when the agent models are not available, but extensive data about their dynamics have been collected in a preliminary offline phase.

To the best of our knowledge,  \cite{Jiao_CDC2021} is the first paper where data-driven output synchronization for (heterogeneous) multi-agent systems (MASs) is investigated. In  \cite{Jiao_CDC2021} agents are described by state-space models that are not   affected by the   exosystem dynamics whose output they want to track. However, a subset of the agents, called ``leaders", has access to the {\em exosystem state} and they share it with their neighbors to ensure that each agent achieves an asymptotic estimate of the exosystem state. The matrices of the state equations are assumed to be unknown, but the matrices involved in the output equations, as well as the exosystem model,  are   known. The design of the state feedback control protocol is based only on data that are collected offline for each agent. 
In \cite{Tian_TII2024}  the   data-driven cooperative output regulation problem is investigated, by imposing a state feedback control strategy that relies on a dead-beat controller. A network of heterogeneous agents is considered, in which   the state dynamics of {\em all} agents are affected by the {\em exosystem state}. All the matrices involved in the agent dynamics are supposed to be unknown. In \cite{Sader_CDC2024} the data-driven output containment control problem for heterogeneous multi-agent systems  is investigated. The   agents dynamics is not affected by the exosystem and all the systems matrices are unknown.\\ 
  It worth remarking that in all these references the exosystem state is accessible to a subset of the agents, nonetheless all agents still 
implement a state observer to estimate the exosystem state that updates based on the state estimates of their neighbors.
Reference \cite{lin_arXiv2025} also deals with
 data driven output regulation, but the authors achieve the goal by introducing multiple copies of the exogeneous systems. Also, the problem solution is obtained via RL techniques.
 Finally, in \cite{Zhou_TCNS2025} a data-driven algorithm is proposed to simultaneously solve an optimal control problem and identify the matrices which describe the system dynamics.

In this paper the output synchronization problem for a discrete-time heterogeneous LTI MAS is investigated. The agents of the network split into two categories: leaders, whose dynamics is affected by  the {\em exosystem output} 
they want to track, and 
followers, that are not affected by it.  In order to synchronize its output with  the exosystem one, each agent implements a state-feedback control strategy, making use of its own state and of its estimate of the exosystem state. However, leaders produce the exosystem state estimate by using a Luenberger observer, while followers achieve this goal by exchanging information about the {\em exosystem output} with their neighbors. 
Necessary and sufficient conditions for the existence of a solution, first by resorting to a model-based approach and then by relying only on data, are derived, together with a (partial) parametrization of the problem solutions. It is shown that, under suitable assumptions on the informativity of the collected data, the data-driven solution is equivalent to the model-based one. An illustrative example concludes the paper showing the soundness of the proposed method.

 The main contributions of this paper compared to the previous literature on this topic are the following ones:
\\
- First of all, we assume what we regard as a more rational problem set-up, by making  the state observers the agents implement
 consistent with their models.
Indeed, if the exosystem directly affects the agent dynamics it makes sense to assume that such agents (the leaders) are aware of this and take advantage of this information to obtain an estimate of the exosystem state in the most efficient way. This means that the state estimate relies on first-hand information received from the exosystem, and not on the state estimates of their neighbors. Meanwhile, agents whose dynamics is not affected by the exosystem (the followers) cannot reasonably have access to direct measurements from the exosystem, and hence the best they can do is to exchange information with their neighbors.
\\
- We assume that leaders have access to the {\em exosystem output}. Since they can provide exact information on such output to their neighbors, it makes sense to exchange the exosystem output estimate
 in order to correct the estimate of the exosystem state (not the exosystem state estimate, nor the agent output).
This makes the overall system dynamics simpler, effective, and easy to design.
\\
- While the previous two comments hold true both for the model-based and the data-driven solutions, focusing on the data-driven solution we can claim that 
our set-up and solutions are more articulated and complete
than those investigated in
\cite{Jiao_CDC2021,Sader_CDC2024}. On the other had, compared to 
\cite{Tian_TII2024} and \cite{lin_arXiv2025} the conditions for solvability we provide are much simpler.\\
- We provide (see Propositions \ref{prop:ParamLeadersCase2} and \ref{prop.follower.stabDD2}) novel conditions to verify when data are informative for stabilization by state feedback \cite{Jiao_CDC2021,vanWaarde_CS_MAG2023} and a parametrization of the solutions.
\\

{\bf Notation.} 
The sets of real numbers and nonnegative integers are denoted by $\mathbb{R}$ and $\mathbb{Z}_+$, respectively. 
Given two integers $h$ and $k$, with $h \le k$, we let $[h,k]$ denote the set $\{h,h+1,\dots,k\}$.
$\mathbb {1}_n$ is the  all-one vector of size $n$. Suffixes will be omitted when the dimensions are irrelevant or can be deduced from the context. 
Given any matrix $Q$,
its {\em Moore-Penrose pseudoinverse} \cite{BenIsraelGreville} is denoted by $Q^{\dagger}$.  If $Q$ is of full 
row rank, then 
$Q^{\dagger} = Q^\top(Q Q^\top)^{-1}$, and 
it is a particular {\em right inverse} of $Q$, by this meaning any (full column rank) matrix $Q^\#$ such that  {\em (s.t.)} $Q Q^\# = I$.

We use ${\rm im}(Q)$ to represent the {\em column space} of $Q$. 
The {\em spectrum} of a square matrix $Q$ is denoted by $\sigma(Q)$ and is the set of all its eigenvalues. 
The {\em Kronecker product} is denoted by $\otimes$.
Given matrices $M_i, i\in [1,p]$,
the  block diagonal matrix whose $i$th diagonal block is the matrix $M_i$ is denoted  either by ${\rm diag} (M_i),i
\in [1,p],$ or by ${\rm  diag}(M_1, M_2, \dots, M_p)$, while given vectors $v_i, i\in [1,p]$, the  column stacking of these vectors is denoted  either by ${\rm col} (v_i), i\in [1,p],$ or by ${\rm  col}(v_1, v_2, \dots, v_p)$.
The $(i,j)$th entry of a matrix $M$ is denoted by $[M]_{i,j}$.

A {\em weighted, directed graph} (digraph) is   a triple ${\mathcal G} = ({\mathcal V}, {\mathcal E}, {\mathcal A})$, where ${\mathcal V} = \{1, \dots, N\}=[1,N]$ is the set of nodes, ${\mathcal E} \subseteq {\mathcal V}\times {\mathcal V}$ is the set of edges, and ${\mathcal A}\in \mathbb{R}^{N\times N}$ is the nonnegative, weighted {\em adjacency matrix} which satisfies $[{\mathcal A}]_{i,j}>  0$ if and only if {\em (iff)} $(j,i)\in {\mathcal E}$. We assume that $[{\mathcal A}]_{i,i} =0$ for every $i\in {\mathcal V}$.
The  {\em in-degree} of the node $i$ is $d_i=\sum_{j=1}^N[{\mathcal A}]_{i,j}$. The {\em in-degree matrix} $\mathcal{D}$ is the diagonal matrix defined as $\mathcal{D}={\rm diag}(d_i), i\in [1,N]$. The {\em Laplacian}  associated with ${\mathcal G} = ({\mathcal V}, {\mathcal E}, {\mathcal A})$ is defined as $\mathcal{L}\doteq \mathcal{D}-\mathcal{A}$. 
A digraph  is {\em connected}  if 
there is a (directed) {\em path} from every node to every other node. 
Given a digraph ${\mathcal G}
 = ({\mathcal V}, {\mathcal E}, {\mathcal A})$, 
a {\em directed spanning tree} is a subgraph of ${\mathcal G}$ that includes all the vertices ${\mathcal V}$ and has a single root node from which all the other nodes can be reached, without forming any cycle.

\section{Error-Feedback Output Synchronization: Problem Statement}\label{sec.Problem_St}
Consider an exosystem described by the equations
\begin{subequations}\label{eq.exo}
\begin{align}
    x_0(t+1)&=Sx_0(t),\label{eq.exo_state}\\
    y_0(t)&=Rx_0(t),\label{eq.exo_out}
\end{align}
\end{subequations}
where $t  \in {\mathbb Z}_+$,  $x_0(t)\in\mathbb{R}^{n_0}$ and $y_0(t)\in\mathbb{R}^{p}$ are the state and the output of the exogenous system, respectively. We make the following standard assumptions on the exosystem.

 \begin{assumption}\label{ass.exo_antistab}
    \cite{Jiao_CDC2021,Kiumarsi_AUT2017}  All the eigenvalues of $S$ are simple and lie on the unit circle.
\end{assumption}
\begin{assumption}\label{ass.exo_obs}
\cite{Jiao_CDC2021,deCarolis_IJRNLC2018}
    The pair $(R,S)$ is   observable.
\end{assumption}
Consider a multi-agent system  consisting of $N$ heterogeneous agents. Without loss of generality, we assume that the first $N_l$ agents (in the following referred to as {\em leaders}) are affected by  the exosystem   output $y_0(t)$, while the remaining $N_f\doteq N-N_l$ agents (the {\em followers}) have not. 
We let ${\mathcal G} = ({\mathcal V}, {\mathcal E}, {\mathcal A})$ denote the directed graph describing the interactions among the $N$ agents, and by ${\mathcal G}_0 = (\{0\} \cup {\mathcal V}, {\mathcal E}_0, {\mathcal A}_0)$ the extended digraph including also the node $0$ corresponding to the exosystem. The set ${\mathcal E}_0$ is obtained by adding to the edges in ${\mathcal E}$ the edges from the exosystem node to the $N_l$ nodes representing the leaders.
We assume that the weights of all such edges are unitary.
Under the previous assumptions, the adjacency matrix ${\mathcal A}_0$ is uniquely identified from ${\mathcal A}$, and the Laplacian matrix associated with it   can be expressed in partitioned form as follows: 
\begin{align} \label{eq.Lap}
    \mathcal{ L}_0=\begin{bmatrix}
        0&\mathbb{0}_{N_l}^\top&\mathbb{0}_{N_f}^\top\\
        {-}\mathbb{1}_{N_l}&\mathcal{L}_{ll}&\mathcal{L}_{lf}\\
        \mathbb{0}_{N_f}&\mathcal{L}_{fl}&\mathcal{L}_{ff}\\
    \end{bmatrix}
\end{align}
where $\mathcal{L}_{ll}\in\mathbb{R}^{N_l\times N_l}$ and $\mathcal{L}_{ff}\in\mathbb{R}^{N_f\times N_f}$. 
\\
We introduce the following assumption on the  digraph $\mathcal{G}_0$.
\begin{assumption}\label{ass.spanning}
\cite{Chen_AUT2023,Jiao_CDC2021,Kiumarsi_AUT2017, Huang_TAC2017}
    The digraph ${\mathcal{G}_0}$  contains a directed spanning tree with root node $0$.
\end{assumption}

The reason why we split the agents in leaders and followers is because, unlike
previous works on this topic, we assume that if the exosystem  output affects the dynamics of an agent,  the agent is aware of receiving direct information from the source it aims at synchronizing with and is
able to transfer the received information to its neighbors. This makes such an agent a leader in the communication process. On the other hand, agents
whose dynamics are not affected by the exogenous system can only rely on the information received from their neighbors, and hence act as followers.

Under the previous assumptions, 
the leader dynamics are
\begin{subequations}\label{eq.leader_i_V2}
\begin{align}
x_i(t+1)&=A_ix_i(t)+B_iu_i(t)+E_iy_0(t),\\
    y_i(t)&=C_ix_i(t)+D_iu_i(t)+F_iy_0(t),
\end{align}
\end{subequations}
for $i\in [1,N_l]$, while the followers  dynamics
are 
\begin{subequations}\label{eq.follower_i}
\begin{align}
    x_i(t+1)&=A_ix_i(t)+B_iu_i(t),\\
    y_i(t)&=C_ix_i(t)+D_iu_i(t),
\end{align}
\end{subequations}
$i \in  [N_l+1,N]$, where $x_i(t)\in\mathbb{R}^{n_i}$, $u_i(t)\in\mathbb{R}^{m_i}$, and $y_i(t)\in\mathbb{R}^{p}$ are the state, input, and output of the $i$th agent, respectively. 
The matrices $A_i, B_i, C_i, D_i, E_i,$ and $F_i$ are real matrices of suitable dimensions.
As it is standard practice in output regulation/synchronization literature, we define the  {\em output tracking error} of the $i$th agent as $e_i(t)\doteq y_i(t)-y_0(t)$. Based on the leaders and followers descriptions \eqref{eq.leader_i_V2} and \eqref{eq.follower_i}, it follows that for $i \in  [1,N_l]$:
\begin{align}\label{eq.leader_out_err_i_V2}
    e_i(t)=C_ix_i(t)+D_iu_i(t)+F_iRx_0(t)-Rx_0(t),
\end{align}
while for $i \in  [N_l+1,N]$ 
\begin{align}\label{eq.follower_out_err_i}
    e_i(t)=C_ix_i(t)+D_iu_i(t)-Rx_0(t).
\end{align}
In this   set-up, every $i$th agent needs to estimate the state of the exosystem, in order to design a state feedback control law that depends on both its state $x_i$ and on its estimate $z_i$   of the exosystem state $x_0$, described as follows:
\begin{equation}
    u_i(t)=K_i(x_i(t)-\Pi_iz_i(t))+\Gamma_iz_i(t),\label{eq.follower_control_u_i}
   \end{equation}
where the matrices $K_i\in {\mathbb R}^{m_i\times n_i}$, $\Pi_i\in {\mathbb R}^{n_i\times n_0}$ and $\Gamma_i\in {\mathbb R}^{m_i\times n_0}$, $i\in [1,N]$,   are design parameters.
Leaders have  direct access to $y_0(t)$ and hence can rely upon a Luenberger observer \cite{SontagBook} to generate the estimate $z_i(t)$ of $x_0(t)$:
\begin{align}\label{eq.leader_control_z_i_V2}
     z_i(t+1)=Sz_i(t)-L(y_0(t)-Rz_i(t)),
\end{align}
$i\in [1, N_l]$, where $L\in {\mathbb R}^{n_0\times p}$, is the observer gain to be designed.
On the other hand,
followers need to exchange information with their neighbors to obtain a reliable asymptotic estimate of $x_0(t)$. 
By extending the philosophy underlying the Luenberger observer, 
followers update their estimate of the exosystem state by collecting from their neighbors the best estimate they can provide of the exosystem output. This means the real exosystem output in the case where the $j$th  neighbor is a leader, and   
$$\hat y_{0,j}(t) \doteq Rz_j(t)$$
in the case where the $j$th neighbor is a follower.
Consequently, the state estimate $z_i(t)$ for the $i$th follower, $i\in [N_l+1,N],$ updates according to:
        \begin{align}
            z_i(t+1)=&
Sz_i(t)
+\frac{1}{1+d_i}H\left[\sum_{j=1}^{N_l}[{\mathcal A}]_{i,j}(y_0(t)-\hat y_{0,i}(t))\right.\nonumber \\
             &\left.+\sum_{j=N_l+1}^N[{\mathcal A}]_{i,j}(\hat y_{0,j}(t)-\hat y_{0,i}(t))\right] \nonumber \\
             =&
Sz_i(t)
+\frac{1}{1+d_i}H\left[\sum_{j=1}^{N_l}[{\mathcal A}]_{i,j}(y_0(t)-Rz_i(t))\right.\nonumber \\
             &\left.+\sum_{j=N_l+1}^N[{\mathcal A}]_{i,j}(Rz_j(t)-Rz_i(t))\right],
        \label{eq.follower_control_z_i_V2}
    \end{align}
 where $[{\mathcal A}]_{i,j}$ is the $(i,j)$th entry of the adjacency matrix ${\mathcal A}$, while 
 $H\in {\mathbb R}^{n_0\times p}$ is a matrix parameter to be designed. 

    \smallskip

  \begin{remark}
To the best of our knowledge, the idea of relying on 
$R z_i(t)$ as an estimate of $y_0(t)$,
 to update the  state estimate dynamics, is original. It provides a simpler algorithm to design the matrices of the observer-based state-feedback controllers compared with the approach that directly employs the agent outputs $y_i(t)$
(see  \cite{Chen_AUT2023,ZhuChen2024}).
\end{remark}

In this scenario, the error-feedback output synchronization problem is stated as follows.

\begin{problem}\label{pb.2} Consider the exosystem \eqref{eq.exo} and the MAS whose leaders are described as in \eqref{eq.leader_i_V2}, $i\in [1,N_l]$, and whose followers are described as in \eqref{eq.follower_i}, $i\in [N_l+1,N]$, and assume that Assumptions \ref{ass.exo_antistab}, \ref{ass.exo_obs},  and
\ref{ass.spanning} 
hold. \\
Design, if possible, matrices $K_i\in {\mathbb R}^{m_i\times n_i}$, $\Pi_i\in {\mathbb R}^{n_i\times n_0}$, $\Gamma_i\in {\mathbb R}^{m_i\times n_0}$, $i\in [1,N]$, $L\in {\mathbb R}^{n_0\times p}$ and $H\in {\mathbb R}^{n_0\times p}$ so that
the overall system, consisting of all the leaders and followers, as well as the state observer equations \eqref{eq.leader_control_z_i_V2} and \eqref{eq.follower_control_z_i_V2}, under the state feedback control law 
\eqref{eq.follower_control_u_i}, $i\in [1,N]$, satisfies the following two conditions:
\begin{enumerate}
    \item if  $x_0(t)$ is identically zero, the system is asymptotically stable;
    \item for all initial conditions $x_0(0)$, $x_i(0)$, and $z_i(0)$,
    \begin{align}
        \lim_{t\to\infty}e_i(t)=\mathbb{0}_p, \quad \forall i\in[1,N].
    \end{align}
\end{enumerate}
\end{problem}
\smallskip

\section{Problem Solution: Model-Based Approach}\label{sec.MB}

For the subsequent analysis it is worth introducing 
(see \cite{Kiumarsi_AUT2017}) 
the $i$th agent {\em state estimation error}
\begin{align}
    \delta_i(t)\doteq z_i(t)-x_0(t),
\end{align}
and
  {\em state tracking error}
\begin{align}\label{eq.state_track_err_i}
    \varepsilon_i(t)\doteq x_i(t)-\Pi_iz_i(t).
\end{align}

\subsection{Leader Nodes Dynamics}

Let us define the global vector corresponding to the leader state dynamics as
$x_l(t) \doteq \mathrm{col}(x_i(t))$,
$i\in [1,N_l]$, and define
$u_l(t)$,  $y_l(t)$,  $z_l(t)$, $e_l(t)$, $\delta_l(t)$, and $\varepsilon_l(t)$ in a similar way.  
Accordingly, we introduce the matrix $A_l \doteq\mathrm{diag}(A_i)$, $i=[1,N_l]$, and  define $B_l$, $C_l$, $D_l$, $E_l$, $F_l$, $K_l$, $\Pi_l$ and $\Gamma_l$  in a similar way. Finally, we define $S_l\doteq(I_{N_l}\otimes S)$ and $R_l\doteq (I_{N_l}\otimes R)$.
We can rewrite the leader dynamics \eqref{eq.leader_i_V2} in compact form as
\begin{subequations}\label{eq.leader_V2}
\begin{align}
x_l(t+1)=&A_lx_l(t)+B_lu_l(t)+E_l(\mathbb{1}_{N_l}\otimes y_0(t)),\\
y_l(t)=&C_lx_l(t)+D_lu_l(t)+F_l(\mathbb{1}_{N_l}\otimes y_0(t)).
\end{align}
\end{subequations}
Also, we can rewrite the state observers equations \eqref{eq.leader_control_z_i_V2} and the state feedback control laws  \eqref{eq.follower_control_u_i} in compact form as
\begin{subequations}\label{eq.leader_control_V2}
    \begin{align}
z_l(t+1)=&(S_l+L_lR_l)z_l(t)-L_l(\mathbb{1}_{N_l}\otimes y_0(t)),\label{eq.leader_control_z_V2}\\
u_l(t)=&K_l(x_l(t)-\Pi_lz_l(t))+\Gamma_lz_l(t).\label{eq.leader_control_u_V2}
    \end{align}
\end{subequations}
Substituting \eqref{eq.leader_control_u_V2} inside \eqref{eq.leader_V2}, we obtain 
\begin{subequations}
    \begin{align}  x_l(t+1)=&(A_l+B_lK_l)x_l(t)+B_l(\Gamma_l-K_l\Pi_l)z_l(t)\nonumber\\
        &+E_l(\mathbb{1}_{N_l}\otimes y_0(t)),\\
        y_l(t)=&(C_l+D_lK_l)x_l(t)+D_l(\Gamma_l-K_l\Pi_l)z_l(t)\nonumber\\
        &+F_l(\mathbb{1}_{N_l}\otimes y_0(t)).
    \end{align}
\end{subequations}
 The dynamics of the state estimation error, of the state tracking error and of the output tracking error become
\begin{subequations}
    \begin{align}
    \delta_l(t+1)=&(S_l+L_lR_l)\delta_l(t),\\
\varepsilon_l(t+1)=&(A_l+B_lK_l)\varepsilon_l(t)\\
   &+[A_l\Pi_l+B_l\Gamma_l-\Pi_l(S_l+L_lR_l)]\delta_l(t)\nonumber\\
   &+(A_l\Pi_l+B_l\Gamma_l+E_lR_l-\Pi_lS_l)(\mathbb{1}_{N_l}\otimes x_0(t)),\nonumber\\
e_l(t)=&(C_l+{D_lK_l})\varepsilon_l(t)+(C\Pi_l+D_l\Gamma_l)\delta_l(t)\\
        &+(C\Pi_l+D_l\Gamma_l+F_lR_l-R_l)(\mathbb{1}_{N_l}\otimes x_0(t)).\nonumber
    \end{align}
\end{subequations}
Note that the leader dynamics is completely decoupled from the follower dynamics. By resorting to a change of basis we can
replace the state vector $[ x_l(t)^\top \ z_l(t)^\top\ ({\mathbb 1}_{N_l}\otimes x_0(t))^\top]^\top$  with $[\delta_l(t)^\top \  \varepsilon_l(t)^\top \ ({\mathbb 1}_{N_l}\otimes x_0(t))^\top]^\top$. Consequently, under Assumptions \ref{ass.exo_antistab}, \ref{ass.exo_obs},  and
\ref{ass.spanning}, we can ensure that the internal dynamics of the leaders are asymptotically stable when $x_0(t)$ is identically zero (see point 1) of Problem \ref{pb.2}) and the error $e_l(t)\to\mathbb{0}$   as $t\to\infty$ 
(see point 2) of Problem \ref{pb.2}) iff
\begin{itemize}
    \item[c1)] There exists a block diagonal matrix $K_l$ s.t. $(A_l+B_lK_l)$ is Schur stable;
        \item[c2)] There exists a block diagonal matrix $L_l$ s.t. $(S_l+L_lR_l)$ is Schur stable;
    \item[c3)] There exist (block diagonal) matrices $\Pi_l$ and  $\Gamma_l$ s.t.
    \begin{subequations}
    \begin{align}
    A_l\Pi_l+B_l\Gamma_l+E_lR_l=&\Pi_l S_l,\\
    C_l\Pi_l+D_l\Gamma_l+F_lR_l=&R_l.
    \end{align}
    \end{subequations}
\end{itemize}
Under condition c3), the dynamics of the state estimation errors, of the state tracking errors and of the output tracking errors become:
\begin{subequations}
\begin{align}
 \delta_l(t+1)=&(S_l+L_lR_l)\delta_l(t),\\
       \varepsilon_l(t+1)=&(A_l+B_lK_l)  \varepsilon_l(t)-(E_l+\Pi_lL_l)R_l\delta_l(t)\\
        e_l(t)=&(C_l+{D_lK_l})\varepsilon_l(t)+(C\Pi_l+D_l\Gamma_l)\delta_l(t).
        \end{align}
\end{subequations}

\subsection{Follower Nodes Dynamics}

Let us define the global vector  corresponding to the follower state dynamics as
$x_f(t)\doteq\mathrm{col}(x_i(t))$, $i\in [N_l+1,N]$, and define  $u_f(t)$,  $y_f(t)$, $z_f(t)$, $e_f(t)$, $ \delta_f(t)$, and $ \varepsilon_f(t)$ in a similar way.
Accordingly, we introduce the matrices $A_f\doteq\mathrm{diag}(A_i)$, $i\in[N_l+1,N]$ and, in a similar way, we define $B_f$, $C_f$, $D_f$, $K_f$, $\Pi_f$ and $ \Gamma_f$. Finally, we define 
$\mathcal{D}_f \doteq \mathrm{diag}(d_i)$, $i\in[N_l+1,N]$, where $d_i$ is the in-degree of the $i$th node of the network, $S_f \doteq (I_{N_f}\otimes S)$ and $R_f \doteq(I_{N_f}\otimes R)$.
We can rewrite the follower dynamics \eqref{eq.follower_i} in  compact form as
\begin{subequations}\label{eq.follower}
\begin{align}
        x_f(t+1)=&A_fx_f(t)+B_fu_f(t),\\
        y_f(t)=&C_fx_f(t)+D_fu_f(t).
\end{align}
\end{subequations}
The state observer equations \eqref{eq.follower_control_z_i_V2} and the state feedback laws \eqref{eq.follower_control_u_i} become
\begin{subequations}\label{eq.follower_control_V2}
    \begin{align}
         z_f(t+1) =&[S_f-(I_{N_f}+\mathcal{D}_f)^{-1}\mathcal{L}_{ff}\otimes HR]z_f(t)\\
    &-[(I_{N_f}+\mathcal{D}_f)^{-1}\mathcal{L}_{fl}\otimes H](\mathbb{1}_{ N_l}\otimes y_0(t)),\nonumber\\
    u_f(t)=&K_f(x_f(t)-\Pi_fz_f(t))+\Gamma_fz_f(t).
    \end{align}
\end{subequations}
The dynamics of the state estimation errors, of the state tracking errors and of the output tracking errors for the followers are:
\begin{subequations}\label{eq.followers.full.nonclean_V2}
    \begin{align}
            \delta_f(t+1)= &[S_f-(I_{N_f}+\mathcal{D}_f)^{-1}\mathcal{L}_{ff}\otimes HR]\delta_f(t),\\
        \varepsilon_f(t+1)=& (A_f+B_fK_f) \varepsilon_f(t)\\
    &+\left[A_f\Pi_f+B_f\Gamma_f-\Pi_fS_f\right.\nonumber\\
    &\left.+\Pi_f(I_{N_f}+\mathcal{D}_f)^{-1}\mathcal{L}_{ff}\otimes HR\right]\delta_f(t)\nonumber\\
    &+(A_f\Pi_f+B_f\Gamma_f-\Pi_fS_f)(\mathbb{1}_{N_f}\otimes x_0(t))\nonumber,\\
e_f(t) =&(C_f+{D_fK_f}) \varepsilon_f(t)\\
&+(C_f \Pi_f+D_f \Gamma_f) \delta_f(t)\nonumber\\
   &+(C_f \Pi_f+D_f \Gamma_f-R_f)
  (\mathbb{1}_{N_f}\otimes x_0(t)).\nonumber    \end{align}
\end{subequations}
Note that the follower dynamics is decoupled, in turn, from the leader dynamics. By resorting to a change of basis we 
replace the state vector $[ x_f(t)^\top \ z_f(t)^\top\ ({\mathbb 1}_{N_f}\otimes x_0(t))^\top]^\top$  with $[\delta_f(t)^\top \  \varepsilon_f(t)^\top \ ({\mathbb 1}_{N_f}\otimes x_0(t))^\top]^\top$.
So, under the proposed assumptions, we can ensure that the internal dynamics of the followers are asymptotically stable (see point 1) of Problem \ref{pb.2}) and $e_f(t) \to {\mathbb 0}$ as $t\to \infty$ (see point 2) of Problem \ref{pb.2})  iff
\begin{itemize}
     \item[c4)] There exists a block diagonal matrix $K_f$ s.t. $(A_f+B_fK_f)$ is Schur stable;
    \item[c5)] There exists a matrix $H$ s.t. $\left[S_f-(I_{N_f}+\mathcal{D}_f)^{-1}\mathcal {L}_{ff}\otimes  HR\right]$ is Schur stable;
    \item[c6)] There exist (block diagonal) matrices $\Pi_f$ and  $ \Gamma_f$ s.t.
    \begin{subequations}
    \begin{align}
    A_f\Pi_f+B_f \Gamma_f=&\Pi_f S_f,\\
    C_f\Pi_f+D_f \Gamma_f=&R_f.
    \end{align}
    \end{subequations}
\end{itemize}
Under  condition c6),  the dynamics of 
\eqref{eq.followers.full.nonclean_V2} become:
\begin{subequations}
\begin{align}
 \delta_f(t+1) =&\left[S_f-(I_{N_f}+\mathcal{D}_f)^{-1}\mathcal {L}_{ff}\otimes  HR\right] \delta_f(t),\\
    \varepsilon_f(t+1)=&(A_f+B_fK_f)  \varepsilon_f(t)\\
  &+\Pi_f\left[(I_{N_f}+\mathcal{D}_f)^{-1}\mathcal {L}_{ff}\otimes  HR\right] \delta_f(t)\nonumber\\
e_f(t) =&(C_f+{D_fK_f}) \varepsilon_f(t)+R_f \delta_f(t).
\end{align}
\end{subequations}

 \subsection{Model-Based Solution}

 As a consequence of the previous analysis, 
the problem solvability reduces to the possibility of satisfying conditions c1)-c6). We have the following result.

\smallskip

\begin{theorem} \label{th:MBsolProb2}
 Consider the exosystem \eqref{eq.exo} and the MAS with leaders  described as in \eqref{eq.leader_i_V2}, $i\in [1,N_l]$, and  followers  described as in \eqref{eq.follower_i}, $i\in [N_l+1,N]$. Under Assumptions \ref{ass.exo_antistab}, \ref{ass.exo_obs},  and 
 \ref{ass.spanning},
 Problem \ref{pb.2} is solvable iff
 \begin{itemize}
   \item[i)] for each $i\in [1,N]$, the pair $(A_i, B_i)$ is stabilizable, 
   \end{itemize}
and  there exist matrices $\Pi_i$, $\Gamma_i$, of suitable dimensions, s.t.
 \begin{itemize}
     \item[ii)] for each $i\in [1,N_l]$, 
     \begin{subequations}\label{eq.leader_cond_MBA}
       \begin{align}
             A_i\Pi_i+B_i\Gamma_i+E_iR&=\Pi_i S,\label{eq.leader_cond_1_MBA}\\
             C_i\Pi_i+D_i\Gamma_i+F_iR&=R,\label{eq.leader_cond_2_MBA}
        \end{align}
         \end{subequations}
         \item[iii)] for each $i\in [N_l+1,N]$,
    \begin{subequations}\label{eq.follower_cond_MBA}
       \begin{align}
             A_i\Pi_i+B_i\Gamma_i&=\Pi_i S,\label{eq.follower_cond_1_MBA}\\
             C_i\Pi_i+D_i\Gamma_i&=R.\label{eq.follower_cond_2_MBA}
        \end{align}
         \end{subequations}
 \end{itemize}
\end{theorem}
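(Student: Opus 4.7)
The plan is to reduce the theorem to the six conditions c1)--c6) that the preceding decoupling analysis has already shown to be jointly necessary and sufficient for Problem~\ref{pb.2}, and then to match c1)+c4), c3), c6) with i), ii), iii), while proving that c2) and c5) are automatically satisfiable under Assumptions~\ref{ass.exo_antistab}--\ref{ass.spanning}.

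First I would handle the block-diagonal bookkeeping. Since $A_l,B_l,\dots,F_l,\Pi_l,\Gamma_l$ are block diagonal with $N_l$ blocks, and $S_l=I_{N_l}\otimes S$, $R_l=I_{N_l}\otimes R$, the Sylvester-type identities in c3) decouple agentwise into \eqref{eq.leader_cond_MBA}, and symmetrically c6) decomposes into \eqref{eq.follower_cond_MBA}. A block-diagonal $K_l$ renders the block-diagonal matrix $A_l+B_lK_l$ Schur iff every pair $(A_i,B_i)$, $i\in[1,N_l]$, is stabilizable; the analogous reasoning on c4) completes i). For c2), Assumption~\ref{ass.exo_obs} provides $L$ with $S+LR$ Schur, so $L_l\doteq I_{N_l}\otimes L$ makes $S_l+L_lR_l=I_{N_l}\otimes(S+LR)$ Schur.

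The substantive step is c5). Set $M\doteq(I_{N_f}+\mathcal{D}_f)^{-1}\mathcal{L}_{ff}$. Under Assumption~\ref{ass.spanning} every follower is reachable from node $0$, so $\mathcal{L}_{ff}$ is a non-singular $M$-matrix; left multiplication by the positive diagonal $(I_{N_f}+\mathcal{D}_f)^{-1}$ preserves the sign pattern and the inverse-positivity, whence $M$ is itself a non-singular $M$-matrix and every $\mu\in\sigma(M)$ has strictly positive real part. A Schur unitary decomposition $M=UJU^{\ast}$ conjugates $S_f-M\otimes HR$ into $I_{N_f}\otimes S-J\otimes HR$, which is block upper triangular with diagonal blocks $S-\mu_iHR$. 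Hence it suffices to produce a single $H$ rendering $S-\mu HR$ Schur for every $\mu\in\sigma(M)$; since $(R,S)$ is observable and each $\mu$ has positive real part, a common-Lyapunov (modified-DARE) construction in the spirit of~\cite{Kiumarsi_AUT2017} provides such an $H$. This simultaneous-stabilization step---rather than the routine algebraic decoupling---is what I would expect to be the main obstacle.
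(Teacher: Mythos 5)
Your reduction to the six conditions c1)--c6), the agentwise decoupling of c3) and c6) into \eqref{eq.leader_cond_MBA} and \eqref{eq.follower_cond_MBA}, the equivalence of c1)+c4) with condition i), and the choice $L_l=I_{N_l}\otimes L$ for c2) all match the paper's proof. The divergence, and the gap, is in c5), which you correctly flag as the main obstacle but do not actually overcome. Your $M$-matrix argument localizes $\sigma\bigl((I_{N_f}+\mathcal{D}_f)^{-1}\mathcal{L}_{ff}\bigr)$ only in the open right half-plane, and you then assert that observability of $(R,S)$ together with $\mathrm{Re}(\mu)>0$ yields a common $H$ via a ``modified-DARE'' construction. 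For discrete-time systems this implication is false in general: for a fixed $H$ the set of $\mu$ such that $S-\mu HR$ is Schur is bounded, and the classical consensusability results (Elia--Mitter, You--Xie) require the eigenvalues $\mu_k$ to be coverable by a disk $D(c,r)$ with $r/|c|$ smaller than the reciprocal of the Mahler measure $\mathcal{M}(S)$. Your argument never invokes Assumption~\ref{ass.exo_antistab}, which is precisely what forces $\mathcal{M}(S)=1$ and makes the step salvageable; without it, ``observable plus positive real part'' is simply not a sufficient hypothesis.

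The paper instead proves (Lemma~\ref{lem.eig_Lap}) the sharper localization $|\mu-1|<1$ via Gershgorin applied to the row-normalized matrix, together with nonsingularity of $\mathcal{L}_{ff}$ obtained by condensing the leaders into node $0$ and citing Lemma~1 of \cite{Su_TAC2012}; then (Lemma~\ref{lem.stab_2}) it reduces to a rank-one output direction $v^\top R$ --- the existence of a $v$ preserving observability again uses the simple unit-circle spectrum of $S$ from Assumption~\ref{ass.exo_antistab} --- so that the dual of \cite[Theorem~3.2]{You_TAC2011} applies, its condition~(17) being guaranteed exactly by the disk $|\mu-1|<1$. If you want to keep your route you must (a) explicitly invoke Assumption~\ref{ass.exo_antistab} to get $\mathcal{M}(S)=1$, (b) show that your finite set of right-half-plane eigenvalues is covered by some disk $D(c,r)$ with $r<|c|$ (true, e.g.\ taking $c$ real and sufficiently large), and (c) cite or prove a simultaneous-stabilization result applicable to that covering disk and to a possibly multi-output pair $(R,S)$. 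As written, the decisive step is asserted rather than proved.
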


\begin{proof}
We first observe that, by Assumption \ref{ass.exo_obs}, the pair $(R,S)$ is observable. This ensures that there exists $L$ such that $S+LR$ is Schur stable, and hence c2) holds for $L_l = I_{N_l}\otimes L$.
On the other hand, 
Assumptions \ref{ass.exo_antistab} and \ref{ass.spanning} ensure that  there exists a matrix  $H$ 
 such that
$S-\lambda HR$ is Schur stable for every $\lambda \in \sigma((I_{N_f} + {\mathcal D}_f )^{-1}{\mathcal L}_{ff})$ (see  Lemmas \ref{lem.eig_Lap} and \ref {lem.stab_2}
in the Appendix).
Corresponding to this matrix  $H$ condition  c5) holds. \\
    We 
observe, now, that there exist matrices $K_i, i\in [1,N]$, such that conditions c1) and c4) hold if and only if each pair $(A_i,B_i)$ is stabilizable, which is exactly condition $i)$.
Finally, 
conditions c3) and c6) are equivalent to $ii)$ and $iii)$.
 
\end{proof}
\smallskip

\section{Data-Driven Approach}\label{sec.DD}

In this section we assume that 
Assumptions \ref{ass.exo_antistab}, \ref{ass.exo_obs}, and \ref{ass.spanning} 
hold, and we introduce a new assumption.

 \begin{assumption}\label{ass.DD}
All the matrices that describe the leaders and followers are unknown, while the matrices $S$ and $R$ which describe the exogenous system dynamics are known.
\end{assumption}

We assume to have collected offline 
  output measurements from the exogenous system, as well as input, state and output measurements from the leader and follower systems on a finite time window  of (sufficiently large) length $T +1$: 
  $y_0^d \doteq \{y_0^d(t)\}_{t=0}^{T-1}$, 
$u_i^d \doteq \{u_i^d(t)\}_{t=0}^{T-1}$, $x_i^d \doteq\{x_i^d(t)\}_{t=0}^{T}$ and $y_i^d \doteq \{y_i^d(t)\}_{t=0}^{T-1}$, $i=[1,N]$.
Accordingly, we set
\begin{subequations}
\begin{align*}
        Y_0^p&\doteq\begin{bmatrix}
        y_0^d(0),\dots,y_0^d(T-1)
    \end{bmatrix}\in \mathbb{R}^{p \times T}\\
    U_i^p&\doteq\begin{bmatrix}
        u_i^d(0),\dots,u_i^d(T-1)
    \end{bmatrix}\in \mathbb{R}^{m_i \times T}\\
      X_i^p&\doteq\begin{bmatrix}
        x_i^d(0),\dots,x_i^d(T-1)
    \end{bmatrix}\in \mathbb{R}^{n_i \times T}\\
      X_i^f&\doteq \begin{bmatrix}
        x_i^d(1),\dots,x_i^d(T)
    \end{bmatrix}\in \mathbb{R}^{n_i \times T}\\
        Y_i^p&\doteq\begin{bmatrix}
        y_i^d(0),\dots,y_i^d(T-1)
    \end{bmatrix}\in \mathbb{R}^{p \times T}    
    \end{align*}
\end{subequations}
In order to solve Problem \ref{pb.2} in a data-driven set-up, we need to understand how conditions $i)$, $ii)$ and $iii)$ in Theorem  \ref{th:MBsolProb2} translate in terms of data matrices. Note that while condition $ii)$ pertains only leaders and condition $iii)$ only followers, condition $i)$ pertains both. However, due to the different state-space models generating the collected data, the conditions we will derive for leaders and followers will be different.

\subsection{Leader Nodes Dynamics}
We start by introducing  the set of all leader systems $(A_i,B_i,E_i,C_i,D_i,F_i)$ that are compatible with the collected data $(Y_0^p,U_i^p,X_i^p,X_i^f,Y_i^p)$, for $i\in [1,N_l]$:
\begin{align*}
\Sigma_i^l\doteq&\left\{(A_i,B_i,E_i,C_i,D_i,F_i)\right.\\&\left. :\ \begin{bmatrix}
        X_i^f\\
        Y_i^p
    \end{bmatrix}=\begin{bmatrix}
        A_i&B_i&E_i\\
        C_i&D_i&F_i
    \end{bmatrix}\begin{bmatrix}
        X_i^p\\U_i^p\\Y_0^p
    \end{bmatrix}\right\}.
    \end{align*}
  We also introduce for every $i\in [1,N_l]$ the set  \begin{align*}
 \Sigma_i^{0,l}\doteq&\left\{(A_i^0,B_i^0,E_i^0,C_i^0,D_i^0,F_i^0)\right.\\&\left. :\ \begin{bmatrix}
        A_i^0&B_i^0&E_i^0\\
        C_i^0&D_i^0&F_i^0
    \end{bmatrix}\begin{bmatrix}
        X_i^p\\
        U_i^p\\
        Y_0^p
    \end{bmatrix}=\begin{bmatrix}
        \mathbb{0}\\
        \mathbb{0}
    \end{bmatrix}\right\}.
\end{align*}
The first aspect we want to address is condition $i)$ for leaders. In a model-based set-up, the stabilizability of each pair $(A_i,B_i)$ is necessary and sufficient for the existence of a matrix $K_i$ such that $A_i+B_iK_i$ is Schur stable. When we rely on data, we are not able to identify the specific sextuple 
$(A_i,B_i,E_i,C_i,D_i,
F_i)$ that has generated the data, and hence in particular the specific pair $(A_i,B_i)$. This means that we need to be able to design from data a matrix $K_i$ that 
makes $A_i+B_iK_i$  Schur stable
for every pair $(A_i,B_i)$ that is compatible with the collected data. For this reason, we  resort to the following definition and characterization.

\begin{definition}
\cite[Definition 3]{Jiao_CDC2021}
The data  $(Y_0^p,U_i^p,X_i^p, X_i^f,$  $Y_i^p)$ are said to be {\em informative for stabilization by state feedback} if there exists a feedback gain $K_i$ s.t. $A_i+B_iK_i$ is Schur stable for all $(A_i,B_i,E_i,C_i,D_i,F_i)\in \Sigma_i^l$.
\end{definition}
\begin{proposition}\cite[Proposition 2]{Jiao_CDC2021} \label{prop.leader.stabDD2}
    The data $(Y_0^p,U_i^p,X_i^p,$ $X_i^f,Y_i^p)$ are informative for stabilization by state feedback iff
    \begin{itemize}
        \item[i)] $X_i^p$ is of full row rank;
        \item[ii)] There exists a right inverse $(X_i^p)^\#$ of $X_i^p$, s.t.
        \begin{itemize}
        \item[iia)] $X_i^f(X_i^p)^\#$ is Schur stable;
        \item[iib)] $Y_0^p(X_i^p)^\#=\mathbb{0}$.
        \end{itemize}
    \end{itemize}
    When so, the stabilizing  feedback gain is $K_i = U_i^p (X_i^p)^\#$.
\end{proposition}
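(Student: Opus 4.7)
The plan is to prove the two directions of the equivalence separately, exploiting the affine structure of the compatibility set $\Sigma_i^l$.

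For sufficiency, I would take $K_i := U_i^p (X_i^p)^\#$ and, for any $(A_i,B_i,E_i,C_i,D_i,F_i) \in \Sigma_i^l$, post-multiply the data-consistency identity
\begin{equation*}
X_i^f = A_i X_i^p + B_i U_i^p + E_i Y_0^p
\end{equation*}
on the right by $(X_i^p)^\#$. Condition (iib) kills the $E_i$ term, giving $A_i + B_i K_i = X_i^f (X_i^p)^\#$, which is Schur by (iia) \emph{uniformly} over the compatible systems.

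For necessity, I would first show that $X_i^p$ must be of full row rank. If not, one picks $0 \neq v \in \mathbb{R}^{n_i}$ with $v^\top X_i^p = 0$; the rank-one perturbed tuple $(A_i + \alpha v v^\top, B_i, E_i, C_i, D_i, F_i)$ then remains in $\Sigma_i^l$ for every $\alpha \in \mathbb{R}$ (since $v v^\top X_i^p = 0$), and for any fixed $K_i$ the trace identity $\mathrm{tr}(A_i + B_i K_i + \alpha v v^\top) = \mathrm{tr}(A_i + B_i K_i) + \alpha\|v\|^2$ combined with $\rho(M) \geq |\mathrm{tr}(M)|/n_i$ forces the closed-loop spectral radius to diverge as $\alpha \to \infty$, contradicting informativity.

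Having secured the rank property, I would let $K_i$ witness informativity and exploit the same scaling idea on the homogeneous set $\Sigma_i^{0,l}$: every $(A^0,B^0,E^0)$ with $A^0 X_i^p + B^0 U_i^p + E^0 Y_0^p = 0$ must satisfy $A^0 + B^0 K_i = 0$, since otherwise adding $\alpha$ times this triple to a compatible system shifts the closed loop by $\alpha(A^0 + B^0 K_i)$ and destabilises it for large $|\alpha|$. Dualising via the standard identity $\mathrm{colsp}(M)^\perp = \ker(M^\top)$ applied to the stacked matrix $M := [(X_i^p)^\top, (U_i^p)^\top, (Y_0^p)^\top]^\top$, every column of $[I_{n_i}^\top, K_i^\top, 0^\top]^\top$ lies in the column space of $M$, so there exists $P$ satisfying the three equalities $X_i^p P = I$, $U_i^p P = K_i$, and $Y_0^p P = 0$ simultaneously. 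Setting $(X_i^p)^\# := P$ furnishes a right inverse for which (iib) holds by construction, and post-multiplying the data identity by $P$ shows $X_i^f P = A_i + B_i K_i$ is Schur, which establishes (iia). The step I expect to be the main obstacle is precisely this last dualisation: converting the uniform cancellation condition $A^0 + B^0 K_i = 0$ on the entire linear space $\Sigma_i^{0,l}$ into the joint solvability of three matrix equations in the single unknown $P$.
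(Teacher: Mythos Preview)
The paper does not supply its own proof of this proposition; it is quoted verbatim from \cite{Jiao_CDC2021}. So there is nothing to compare against, and the question is simply whether your argument stands on its own.

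Your sufficiency direction and the full--row--rank step of necessity are fine. The dualisation at the end is also correct: once you know that every left annihilator $w^\top$ of $[X_i^p;U_i^p;Y_0^p]$ annihilates $[I_{n_i};K_i;0]$, the column--space inclusion and the existence of $P$ follow exactly as you describe.

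There is, however, a genuine gap in the step where you claim that $A^0+B^0K_i\neq 0$ forces instability of $C+\alpha(A^0+B^0K_i)$ for large $|\alpha|$, with $C\doteq A_i+B_iK_i$. This is false as stated: take $n_i=2$, $C=\tfrac12 I_2$, and $M=A^0+B^0K_i=\left[\begin{smallmatrix}0&1\\0&0\end{smallmatrix}\right]$. Then $C+\alpha M$ is upper triangular with both eigenvalues equal to $\tfrac12$ for every $\alpha$, hence Schur for all $\alpha$. So a nonzero but nilpotent $M$ does not, by itself, destabilise the closed loop along the ray. Your trace bound $\rho(\cdot)\ge |\mathrm{tr}(\cdot)|/n_i$ worked in the rank argument only because the perturbation $vv^\top$ there has strictly positive trace; for a general element of $\Sigma_i^{0,l}$ you have no such control.

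The fix is short and uses a structural feature of $\Sigma_i^{0,l}$ that you have not yet exploited: the constraint $[A^0\ B^0\ E^0]\,[X_i^p;U_i^p;Y_0^p]=0$ is row--wise, so $\Sigma_i^{0,l}$ is closed under left multiplication by arbitrary $n_i\times n_i$ matrices. Hence if $(A^0,B^0,E^0)\in\Sigma_i^{0,l}$ with $M\doteq A^0+B^0K_i\neq 0$, then $(M^\top A^0,\,M^\top B^0,\,M^\top E^0)\in\Sigma_i^{0,l}$ as well, and the associated closed--loop shift is $M^\top M$, which has trace $\|M\|_F^2>0$. Now your trace argument applies verbatim to $C+\alpha M^\top M$ and delivers the contradiction. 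With this patch the proof is complete.
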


It is worth noticing that, as a consequence of Proposition \ref{prop.leader.stabDD2}, 
if  the data $(Y_0^p,U_i^p,X_i^p,X_i^f,Y_i^p)$ are informative for stabilization by state feedback, then 
$X_i^p$ is of full row rank, and
the matrix
\begin{equation}
    \Psi_i \doteq
    \begin{bmatrix}
        X_i^p \cr Y_0^p \end{bmatrix}
        \in {\mathbb R}^{(n_i+p)\times T}
        \label{eq.defPsii2} \end{equation}
  satisfies ${\rm rank} ( \Psi_i) = {\rm rank} (X_i^p) + {\rm rank} (Y_0^p) = n_i + {\rm rank} (Y_0^p)$. 
  This allows us to introduce a parametrization of all possible 
  right inverses  $(X_i^p)^\#$ of $X_i^p$ that satisfy  $iib)$
  of  Proposition \ref{prop.leader.stabDD2}, as well as a method to check if in this set there exists at least one matrix for which also $iia)$ holds.
  \medskip

 \begin{proposition} \label{prop:ParamLeadersCase2}
Let
$ \Psi_i$ be defined as in \eqref{eq.defPsii2}, with $X_i^p$  of full row rank, and ${\rm rank} ( \Psi_i) = n_i + {\rm rank} (Y_0^p)$, and let
    $\Psi_i^\dag$ be its Moore-Penrose inverse.
    The following facts are equivalent:
\begin{enumerate}
    \item There exists a matrix $(X_i^p)^\#$ such that 
    $\Psi_i (X_i^p)^\#= \begin{bmatrix} I_{n_i}\cr {\mathbb 0}\end{bmatrix}$ and
    $X_i^f(X_i^p)^\#$ is Schur.
        \item The pair $\left(X_i^f \Psi_i^\dag \begin{bmatrix} I_{n_i} \cr {\mathbb 0}\end{bmatrix},  X_i^f\left(I_{T}- \Psi_i^\dag \Psi_i\right)\right)$ is stabilizable.
        \item ${\rm rank}
        \begin{bmatrix}
            X_i^f  \Psi_i^\dag \begin{bmatrix} I_{n_i} \cr {\mathbb 0}\end{bmatrix}
            -\lambda I_{n_i}\!\!\!&\vline &\!\!\! X_i^f\left(I_{T}- \Psi_i^\dag  \Psi_i\right)
        \end{bmatrix}\!\!  
        \!=\!\! n_i,$ $\forall\ \lambda\in\mathbb{C},\ |\lambda|\ge 1$.
        \end{enumerate}
\end{proposition}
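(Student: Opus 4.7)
The plan is to transform condition 1 into a standard state-feedback stabilization problem, from which the equivalence with conditions 2 and 3 follows by well-known facts.

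The first step is to parametrize the set of all matrices $(X_i^p)^\#$ satisfying $\Psi_i (X_i^p)^\# = \begin{bmatrix} I_{n_i} \\ \mathbb{0} \end{bmatrix}$. I would begin by verifying that $(X_i^p)_0^\# \doteq \Psi_i^\dag \begin{bmatrix} I_{n_i} \\ \mathbb{0} \end{bmatrix}$ is a particular solution; this amounts to checking that the columns of $\begin{bmatrix} I_{n_i} \\ \mathbb{0}\end{bmatrix}$ lie in the column space of $\Psi_i$, which follows from the assumption ${\rm rank}(\Psi_i) = n_i + {\rm rank}(Y_0^p)$ (so that the row spaces of $X_i^p$ and $Y_0^p$ intersect trivially, ensuring that $\ker X_i^p$ is mapped surjectively onto $\mathrm{im}(Y_0^p)$ by $Y_0^p$). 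Once a particular solution is in hand, the general solution takes the form
\begin{equation*}
(X_i^p)^\# = \Psi_i^\dag \begin{bmatrix} I_{n_i} \\ \mathbb{0} \end{bmatrix} + (I_T - \Psi_i^\dag \Psi_i) Q ,
\end{equation*}
with $Q\in\mathbb{R}^{T\times n_i}$ a free parameter, since the columns of $I_T-\Psi_i^\dag\Psi_i$ span $\ker\Psi_i$.

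Multiplying by $X_i^f$ then gives
\begin{equation*}
X_i^f (X_i^p)^\# = \underbrace{X_i^f \Psi_i^\dag \begin{bmatrix} I_{n_i} \\ \mathbb{0} \end{bmatrix}}_{\doteq\, \mathcal{A}} + \underbrace{X_i^f(I_T - \Psi_i^\dag \Psi_i)}_{\doteq\, \mathcal{B}}\, Q .
\end{equation*}
Therefore item 1 is equivalent to the existence of $Q$ for which $\mathcal{A}+\mathcal{B} Q$ is Schur; in other words, to the existence of a stabilizing state feedback for the discrete-time pair $(\mathcal{A},\mathcal{B})$. This is a textbook result, equivalent in turn to the stabilizability of $(\mathcal{A},\mathcal{B})$ (condition 2), and, by the Hautus/PBH rank test, to the rank condition in item 3 for all $\lambda\in\mathbb{C}$ with $|\lambda|\ge 1$.

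The main obstacle I anticipate is the careful handling of the particular solution and the parametrization, specifically the proof that $\Psi_i\Psi_i^\dag \begin{bmatrix} I_{n_i} \\ \mathbb{0}\end{bmatrix}=\begin{bmatrix} I_{n_i} \\ \mathbb{0}\end{bmatrix}$; this is where the full row rank of $X_i^p$ and the rank hypothesis on $\Psi_i$ are crucially used. Once this algebraic lemma is established, the rest of the argument is a direct translation between (i) the stabilization problem parametrized by $Q$, (ii) stabilizability of $(\mathcal{A},\mathcal{B})$, and (iii) the PBH rank condition, all of which are standard.
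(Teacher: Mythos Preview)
Your proposal is correct and follows the same overall strategy as the paper: parametrize all matrices $(X_i^p)^\#$ solving $\Psi_i (X_i^p)^\# = \begin{bmatrix} I_{n_i}\\ \mathbb{0}\end{bmatrix}$, substitute into $X_i^f(X_i^p)^\#$, and recognize the result as a state-feedback stabilization problem for the pair $(\mathcal{A},\mathcal{B})$, whence 1)$\Leftrightarrow$2)$\Leftrightarrow$3) by PBH.

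The only noteworthy difference is in how the parametrization is obtained. You invoke directly the standard general-solution formula $X = \Psi_i^\dag b + (I_T-\Psi_i^\dag\Psi_i)Q$ for $\Psi_i X = b$, after arguing that $b=\begin{bmatrix} I_{n_i}\\ \mathbb{0}\end{bmatrix}$ lies in ${\rm im}(\Psi_i)$; the cleanest justification of the latter is the dimension count ${\rm im}(\Psi_i)\subseteq \mathbb{R}^{n_i}\times {\rm im}(Y_0^p)$ with both sides of dimension $n_i+{\rm rank}(Y_0^p)$, hence equality. The paper instead takes a rank factorization $Y_0^p=L_0R_0$, reduces to the full-row-rank matrix $\begin{bmatrix} X_i^p\\ R_0\end{bmatrix}$, and invokes a separate auxiliary lemma (their Lemma~\ref{lem:ParamLeadersCase1}) to obtain exactly the same parametrization. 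Your route is shorter and uses only textbook pseudoinverse properties; the paper's route is more self-contained but heavier. Either way, once the parametrization is in hand the two proofs coincide.
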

\begin{proof}
Let $L_0$ and $R_0$ be a full column rank matrix and a full row rank matrix, respectively, such that
$Y_0^p= L_0 R_0$, so that
$$ \Psi_i = \begin{bmatrix} I_{n_i} & {\mathbb 0}\cr
{\mathbb 0} & L_0\end{bmatrix} 
\begin{bmatrix} X_i^p\cr
R_0\end{bmatrix}.$$
Note that in the previous factorization the matrix on the left is of full column rank, while the matrix on the right is of full row rank and such common rank coincides with
${\rm rank}( \Psi_i)$.
Consequently, by the properties of the Moore-Penrose inverse \cite{Greville_SIAM66}, we can claim that
$$ \Psi_i^\dag =  
\begin{bmatrix} X_i^p\cr
R_0\end{bmatrix}^\dag 
\begin{bmatrix} I_{n_i} & {\mathbb 0}\cr
{\mathbb 0} & L_0\end{bmatrix}^\dag.$$
Now, set
\begin{align} \label{eqparam0}
      \Psi_i^\# \doteq 
      \Psi_i^\dag+\left(I_{T}- \Psi_i^\dag  \Psi_i\right)Q_i,
\end{align}
where $Q_i$ is a free matrix parameter.
\\
We   want to prove that a matrix $(X_i^p)^\#$ satisfies
\begin{equation}
    \label{eq.prm1}
 \Psi_i (X_i^p)^\#= \begin{bmatrix} I_{n_i}\cr {\mathbb 0}\end{bmatrix}
\end{equation}
if and only if it satisfies
\begin{equation}
   \label{eq.prm2} 
(X_i^p)^\# =  \Psi_i^\# \begin{bmatrix} I_{n_i}\cr {\mathbb 0}\end{bmatrix},
\end{equation}
for some $\Psi_i^\#$ is expressed as in \eqref{eqparam0}, i.e., for some  $Q_i$. \\
Clearly, if \eqref{eq.prm2}  holds
for some $Q_i$, then
\begin{align*}
     \Psi_i (X_i^p)^\# 
&=  \Psi_i \left[\Psi_i^\dag+\left(I_{T}- \Psi_i^\dag  \Psi_i\right)Q_i\right]\begin{bmatrix} I_{n_i}\cr {\mathbb 0}\end{bmatrix}\\
&=  \Psi_i  \Psi_i^\dag \begin{bmatrix} I_{n_i}\cr {\mathbb 0}\end{bmatrix}
= \begin{bmatrix} I_{n_i} & {\mathbb 0}\cr
{\mathbb 0} & L_0L_0^\dag \end{bmatrix}\begin{bmatrix} I_{n_i}\cr {\mathbb 0}\end{bmatrix} = \begin{bmatrix} I_{n_i}\cr {\mathbb 0}\end{bmatrix}
\end{align*}
which means that \eqref{eq.prm1} holds.\\
Conversely, 
suppose that \eqref{eq.prm1} holds. 
Since $L_0$ is of full column rank, this means that 
$$\begin{bmatrix} X_i^p\cr
R_0\end{bmatrix} (X_i^p)^\#= \begin{bmatrix} I_{n_i}\cr {\mathbb 0}\end{bmatrix}.$$
Since $\begin{bmatrix} X_i^p\cr
R_0\end{bmatrix}$ is of full row rank, by Lemma \ref{lem:ParamLeadersCase1}, we can claim that
  $(X_i^p)^\#$ can be expressed as
\begin{equation}\label{eq.param1}
(X_i^p)^\#= 
\begin{bmatrix} X_i^p\cr
R_0\end{bmatrix}^\dag \begin{bmatrix} I_{n_i}\cr {\mathbb 0}\end{bmatrix}
+ \left(I_T - \begin{bmatrix} X_i^p\cr
R_0\end{bmatrix}^\dag \begin{bmatrix} X_i^p\cr
R_0\end{bmatrix}\right) Q_i\begin{bmatrix} I_{n_i}\cr {\mathbb 0}\end{bmatrix}
\end{equation}
for some matrix $Q_i$.
It is easy to see that
$$\begin{bmatrix} X_i^p\cr
R_0\end{bmatrix}^\dag \begin{bmatrix} I_{n_i}\cr {\mathbb 0}\end{bmatrix} = 
\begin{bmatrix} X_i^p\cr
R_0\end{bmatrix}^\dag  \begin{bmatrix} I_{n_i} & {\mathbb 0}\cr
{\mathbb 0} & L_0\end{bmatrix}^\dag\begin{bmatrix} I_{n_i}\cr {\mathbb 0}\end{bmatrix}$$
as well as
$$\begin{bmatrix} X_i^p\cr
R_0\end{bmatrix}^\dag \begin{bmatrix} X_i^p\cr
R_0\end{bmatrix}= \begin{bmatrix} X_i^p\cr
R_0\end{bmatrix}^\dag \begin{bmatrix} I_{n_i} & {\mathbb 0}\cr
{\mathbb 0} & L_0\end{bmatrix}^\dag\begin{bmatrix} I_{n_i} & {\mathbb 0}\cr
{\mathbb 0} & L_0\end{bmatrix}\begin{bmatrix} X_i^p\cr
R_0\end{bmatrix}.$$
This shows that \eqref{eq.param1}
coincides with \eqref{eq.prm2}, where 
$ \Psi_i^\#$ is described as in \eqref{eqparam0}
for some $Q_i$.
\\
$1) \Leftrightarrow 2)$ \ From the previous analysis, it follows that condition 1) holds if and only if there exists $Q_i$ such that 
\begin{align*}
    &X_i^f (X_i^p)^\#= X_i^f  \left(\Psi_i^\dag+\left(I_{T}- \Psi_i^\dag  \Psi_i\right)Q_i\right) \begin{bmatrix} I_{n_i} \cr {\mathbb 0}\end{bmatrix} \\
    &= \left(X_i^f  \Psi_i^\dag \begin{bmatrix} I_{n_i} \cr {\mathbb 0}\end{bmatrix}\right)+ \left(X_i^f\left(I_{T}-\Psi_i^\dag \Psi_i\right)\right)Q_i \begin{bmatrix} I_{n_i} \cr {\mathbb 0}\end{bmatrix}
    \end{align*}
    is Schur. 
This is equivalent to saying that the pair 
$  \left(X_i^f    \Psi_i^\dag \begin{bmatrix} I_{n_i} \cr {\mathbb 0}\end{bmatrix},   X_i^f\left(I_{T}-\Psi_i^\dag 
\Psi_i\right)\right)$ is stabilizable (condition 2)).
\\
$2) \Leftrightarrow 3)$ \ Follows from the PBH reachability test \cite{SontagBook}.
 
\end{proof}
\smallskip

We now address condition $ii)$ in Theorem \ref{th:MBsolProb2} from a data-driven perspective.

\begin{proposition} \label{prop.leader.EqnsDD2}
For every $i\in[1,N_l]$, given the data $(Y_0^p,U_i^p,$ $X_i^p,X_i^f,Y_i^p)$, the following conditions are equivalent:
    \begin{itemize}
        \item[1)] There exist matrices $\Pi_i$ and $\Gamma_i$ such that \eqref{eq.leader_cond_MBA} hold for all the sextuples $(A_i,B_i,E_i,C_i,D_i,F_i)\in  \Sigma_i^l$.
       
        \item[2)] There exists a matrix $M_i$ such that   
    \begin{subequations}\label{eq.leader_cond_DD2}
        \begin{align}
            X_i^fM_i &=X_i^pM_iS,\label{eq.leader_cond_1_DD2}\\
            Y_i^pM_i&=R,\label{eq.leader_cond_2_DD2}\\
            Y_0^pM_i&=R.\label{eq.leader_cond_3_DD2}
            \end{align}
             \end{subequations}
        \end{itemize}
\end{proposition}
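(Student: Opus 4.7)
\smallskip

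\textbf{Proof proposal.} The plan is to rewrite both equations of \eqref{eq.leader_cond_MBA} as a single matrix identity and then exploit the structure of $\Sigma_i^l$ as an affine set.  Specifically, \eqref{eq.leader_cond_MBA} is equivalent to
$$\begin{bmatrix} A_i & B_i & E_i \\ C_i & D_i & F_i \end{bmatrix}\begin{bmatrix} \Pi_i \\ \Gamma_i \\ R \end{bmatrix}=\begin{bmatrix} \Pi_i S \\ R \end{bmatrix},$$
while the data constraint defining $\Sigma_i^l$ reads
$\begin{bmatrix} A_i & B_i & E_i \\ C_i & D_i & F_i \end{bmatrix}\begin{bmatrix} X_i^p \\ U_i^p \\ Y_0^p \end{bmatrix}=\begin{bmatrix} X_i^f \\ Y_i^p \end{bmatrix}$. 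This juxtaposition suggests that the natural link between 1) and 2) is the identity
$\begin{bmatrix} \Pi_i^\top & \Gamma_i^\top & R^\top \end{bmatrix}^\top = \begin{bmatrix} (X_i^p)^\top & (U_i^p)^\top & (Y_0^p)^\top \end{bmatrix}^\top M_i$.

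For the easy implication $2)\Rightarrow 1)$, I would simply define $\Pi_i \doteq X_i^p M_i$ and $\Gamma_i \doteq U_i^p M_i$, note from \eqref{eq.leader_cond_3_DD2} that $R = Y_0^p M_i$, and then verify \eqref{eq.leader_cond_1_MBA}--\eqref{eq.leader_cond_2_MBA} by right-multiplying the data constraint by $M_i$ and invoking \eqref{eq.leader_cond_1_DD2}--\eqref{eq.leader_cond_2_DD2}. This holds for any sextuple in $\Sigma_i^l$ because the data constraint does.

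The non-trivial direction is $1)\Rightarrow 2)$. The key observation is that $\Sigma_i^l$ is an affine set: picking any particular sextuple (e.g., the true one) $\Sigma_i^l = \{\text{particular}\} + \Sigma_i^{0,l}$. Applying \eqref{eq.leader_cond_MBA} both to the particular sextuple and to its sum with an arbitrary element of $\Sigma_i^{0,l}$ and subtracting, I obtain
$$\begin{bmatrix} A_i^0 & B_i^0 & E_i^0 \\ C_i^0 & D_i^0 & F_i^0 \end{bmatrix}\begin{bmatrix} \Pi_i \\ \Gamma_i \\ R \end{bmatrix} = {\mathbb 0}, \quad \forall (A_i^0,\dots,F_i^0)\in \Sigma_i^{0,l}.$$
By definition of $\Sigma_i^{0,l}$, the rows of $[A_i^0\ B_i^0\ E_i^0]$ (and similarly of $[C_i^0\ D_i^0\ F_i^0]$) span the entire left kernel of $\begin{bmatrix} (X_i^p)^\top & (U_i^p)^\top & (Y_0^p)^\top \end{bmatrix}^\top$. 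Hence the columns of $\begin{bmatrix} \Pi_i^\top & \Gamma_i^\top & R^\top \end{bmatrix}^\top$ must be orthogonal to that left kernel, i.e., they lie in ${\rm im}\begin{bmatrix} (X_i^p)^\top & (U_i^p)^\top & (Y_0^p)^\top \end{bmatrix}^\top$. This is precisely what guarantees the existence of a matrix $M_i$ with $\Pi_i = X_i^p M_i$, $\Gamma_i = U_i^p M_i$, and $R = Y_0^p M_i$, the last of which is \eqref{eq.leader_cond_3_DD2}.

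Finally, substituting these parametrizations into the model-based equations \eqref{eq.leader_cond_MBA} applied to any (e.g.\ the true) sextuple in $\Sigma_i^l$ and using the data constraint yields $X_i^f M_i = X_i^p M_i S$ and $Y_i^p M_i = R$, i.e., \eqref{eq.leader_cond_1_DD2} and \eqref{eq.leader_cond_2_DD2}. The main obstacle to watch out for is the \emph{uniformity} in $(A_i,\dots,F_i)\in \Sigma_i^l$ required by condition 1): it is exactly this uniformity that, via the affine-set argument, forces $\begin{bmatrix} \Pi_i^\top & \Gamma_i^\top & R^\top \end{bmatrix}^\top$ to lie in the column span of the data stack, and hence yields $M_i$. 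Everything else is a direct verification using linearity and the data constraint.
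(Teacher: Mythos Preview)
Your proposal is correct and follows essentially the same route as the paper's proof: both directions hinge on the observation that $\Sigma_i^l$ is an affine translate of $\Sigma_i^{0,l}$, so uniformity of \eqref{eq.leader_cond_MBA} over $\Sigma_i^l$ forces ${\rm ker_L}\begin{bmatrix}X_i^p\\U_i^p\\Y_0^p\end{bmatrix}\subseteq{\rm ker_L}\begin{bmatrix}\Pi_i\\\Gamma_i\\R\end{bmatrix}$, which yields $M_i$, after which \eqref{eq.leader_cond_DD2} follows by substitution; the converse is the same direct verification you outline.
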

\begin{proof}  The proof is inspired by the proof in \cite[Lemma 3]{Jiao_CDC2021} and has some similarities with the proof in \cite[Lemma 2]{Tian_TII2024}. \\
$1) \Rightarrow 2)$  
 If we denote by $(A_i,B_i,E_i,C_i,D_i,F_i)$  the ``real" sextuple that generated the data, then clearly $(A_i,B_i,E_i,C_i,D_i,F_i)\in  \Sigma_i^l$ 
and the sextuples $(\hat A_i,\hat B_i,\hat {E}_i,\hat C_i,\hat D_i,\hat {F}_i)$ compatible with the data are those and only those that satisfy
\begin{align*}
 \begin{bmatrix}
     A_i-\hat A_i&B_i-\hat B_i&E_i-\hat {E}_i\\
     C_i-\hat C_i&D_i-\hat D_i&F_i-\hat {F}_i
 \end{bmatrix}\begin{bmatrix}
        X_i^p\\
        U_i^p\\
        Y_0^p
    \end{bmatrix}=\begin{bmatrix}
        \mathbb{0}\\
        \mathbb{0}
    \end{bmatrix},
\end{align*}
which implies that $(A_i- \hat A_i,B_i- \hat B_i,E_i-\hat {E}_i,C_i- \hat C_i,D_i-\hat D_i, F_i- \hat {F}_i)\in \Sigma_i^{0,l}$.
Consequently, 
if the equations \eqref{eq.leader_cond_MBA} hold for all the sextuples   compatible with the data, this means 
that 
for all $(\hat A_i,\hat B_i,\hat {E}_i,\hat C_i,\hat D_i,\hat {F}_i)\in \Sigma_i^l$, it must hold that
\begin{align}
\begin{bmatrix}
     A_i-\hat A_i&B_i-\hat B_i&E_i-\hat E_i\\
     C_i-\hat C_i&D_i-\hat D_i&F_i-\hat F_i
 \end{bmatrix} \begin{bmatrix}
        \Pi_i\\
        \Gamma_i\\
        R  \end{bmatrix}=\begin{bmatrix}
        \mathbb{0}\\
        \mathbb{0}
    \end{bmatrix}.
\end{align}
This implies  that 
\begin{align*}
    {\rm ker_L}\left(\begin{bmatrix}
        X_i^p\\
        U_i^p\\
        Y_0^p
    \end{bmatrix}\right)\subseteq {\rm ker_L}\left(\begin{bmatrix}
        \Pi_i\\
        \Gamma_i\\
        R
    \end{bmatrix}\right)
\end{align*}
where ${\rm ker_L} (Q) \doteq \{ v: v^\top Q = {\mathbb 0}^\top\}.$
This ensures that there exists a matrix $M_i$ such that 
\begin{align*}
    \begin{bmatrix}
        X_i^p\\
        U_i^p\\
        Y_0^p
    \end{bmatrix}M_i=\begin{bmatrix}
        \Pi_i\\
        \Gamma_i\\
        R
    \end{bmatrix}.
\end{align*}
This implies that
$  X_i^pM_i=\Pi_i,
    U_i^pM_i=\Gamma_i, 
    Y_0^pM_i= R,
$
and hence  \eqref{eq.leader_cond_3_DD2} holds.
Moreover, equation \eqref{eq.leader_cond_1_MBA} becomes
\begin{align*}
&A_i \Pi_i+B_i\Gamma_i+E_i R =\Pi_iS,\nonumber\\
\Rightarrow\quad&A_iX_i^pM_i+B_iU_i^pM_i+E_i Y_0^p M_i =X_i^pM_iS,\nonumber\\
\Rightarrow\quad &X_i^fM_i=X_i^pM_iS,
\end{align*}
while  equation \eqref{eq.leader_cond_2_MBA} becomes
\begin{align*}
    &C_i\Pi_i+D_i\Gamma_i+F_iR =R,\nonumber\\
  \Rightarrow\quad   &C_iX_i^pM_i+D_iU_i^pM_i+F_i Y_0^pM_i = Y_0^pM_i,\nonumber\\
 \Rightarrow\quad    &Y_i^pM_i=Y_0^pM_i,
\end{align*}
that are exactly \eqref{eq.leader_cond_1_DD2} and \eqref{eq.leader_cond_2_DD2}.\\
$2) \Rightarrow 1)$ Suppose that there exists a matrix $M_i$ such that \eqref{eq.leader_cond_DD2} hold. 
Set
$\Pi_i\doteq X_i^pM_i,$ and
        $\Gamma_i\doteq U_i^pM_i.$
For every sextuple $(A_i,B_i,E_i,C_i,D_i,F_i)\in  \Sigma_i^l$, condition \eqref{eq.leader_cond_1_DD2}, making use of \eqref{eq.leader_cond_3_DD2}, leads to
\begin{align*}
&(A_iX_i^p+B_iU_i^p+E_iY_0^p)M_i=X_i^pM_iS\nonumber\\
    \Rightarrow\quad    &A_iX_i^pM_i+B_iU_i^pM_i+E_iY_0^pM_i=X_i^pM_iS\nonumber\\
    \Rightarrow\quad &A_i\Pi_i+B_i\Gamma_i+E_i R =\Pi_iS
    \end{align*}
    that is equal to \eqref{eq.leader_cond_1_MBA}.
    \\
    Similarly, starting from \eqref{eq.leader_cond_2_DD2} and using \eqref{eq.leader_cond_3_DD2}, for every sextuple $(A_i,B_i,E_i,C_i,D_i,F_i)\in  \Sigma_i^l$ we get
    \begin{align*}
        &(C_iX_i^p+D_iU_i^p+F_i Y_0^p)M_i=R\nonumber\\
       \Rightarrow\quad &C_iX_i^pM_i+D_iU_i^pM_i+F_iY_0^pM_i=R\nonumber\\
        \Rightarrow\quad &C_i\Pi_i+D_i\Gamma_i+F_i R=R.
    \end{align*}
that is equal to \eqref{eq.leader_cond_2_MBA}.
 
\end{proof}

 \medskip

\subsection{Follower Nodes Dynamics}

For $i\in [N_l+1,N]$, based on equations \eqref{eq.follower_i}, 
we define the set of all   quadruples $(A_i,B_i,C_i,D_i)$, representing followers, that are compatible with the data $(U_i^p,X_i^p,X_i^f,Y_i^p)$ as
\begin{align*}
    \Sigma_i^f\doteq&\left\{(A_i,B_i,C_i,D_i): \begin{bmatrix}
        X_i^f\\
        Y_i^p
    \end{bmatrix}=\begin{bmatrix}
        A_i&B_i\\
        C_i&D_i
    \end{bmatrix}\begin{bmatrix}
        X_i^p\\U_i^p
    \end{bmatrix}\right\}.
    \end{align*}
Also, as in the previous subsection,    we introduce the set
\begin{align*}
\Sigma_i^{0,f}\doteq&\left\{(A_i^0,B_i^0,C_i^0,D_i^0)  \ : \begin{bmatrix}
        A_i^0&B_i^0\\
        C_i^0&D_i^0
    \end{bmatrix}\begin{bmatrix}
        X_i^p\\
        U_i^p
    \end{bmatrix}=\begin{bmatrix}
        \mathbb{0}\\
        \mathbb{0}
    \end{bmatrix}\right\}. 
\end{align*}

We now address the data-driven characterization of condition $i)$ in Theorem \ref{th:MBsolProb2} for the followers.
 \smallskip

\begin{definition}\cite[Definition 12]{vanWaarde_CS_MAG2023}
The data $(U_i^p,X_i^p, X_i^f,$ $Y_i^p)$ are are said to be {\em informative for stabilization by state feedback} if there exists a feedback gain $K_i$ such that $A_i+B_iK_i$ is Schur stable for all $(A_i,B_i,C_i,D_i)\in\Sigma_i^f$.
\end{definition}

The characterization of informativity for stabilization by state feedback  provided in Proposition \ref{prop.follower.stabDD2}, below, is similar to, but much simpler than, the one provided in Proposition \ref{prop:ParamLeadersCase2} and is omitted due to space constraints.

\begin{proposition} \label{prop.follower.stabDD2}
The following facts are equivalent:
\begin{itemize}
    \item[i)] The data $(U_i^p,X_i^p,X_i^f,Y_i^p)$ are informative for stabilization by state feedback.
\item[ii)]  $X_i^p$ is of full row rank, and
    there exists a right inverse $(X_i^p)^\#$ of $X_i^p$ such that $ X_i^f(X_i^p)^\#$ is Schur.
\item[iii)]  $X_i^p$ is of full row rank and
    the pair $\left(
            X_i^f(X_i^p)^\dag,X_i^f(I_T-(X_i^p)^\dag X_i^p)
        \right)$ is stabilizable.
\item[iv)]  $X_i^p$ is of full row rank, and
     ${\rm rank}\left(\begin{bmatrix}
            X_i^f(X_i^p)^\dag-\lambda I_{n_i}\!\!\!&\vline&\!\!\!X_i^f(I_T-(X_i^p)^\dag X_i^p)
        \end{bmatrix}\right)\!\!=\!n_i,$ $ \forall\ \lambda\in\mathbb{C},\ |\lambda|\ge 1$.
    
\end{itemize} 
\end{proposition}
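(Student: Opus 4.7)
The plan is to follow the same blueprint used for Proposition \ref{prop:ParamLeadersCase2}, with the simplification that here there is no $Y_0^p$ constraint, so no auxiliary factorization $Y_0^p = L_0 R_0$ is needed and the Moore--Penrose pseudoinverse of $X_i^p$ itself plays the role that $\Psi_i^\dag$ plays in the leader case.

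First I would establish $i)\Leftrightarrow ii)$ by the standard argument. For the ``if'' direction, if $X_i^p$ is of full row rank and $(X_i^p)^\#$ is a right inverse making $X_i^f(X_i^p)^\#$ Schur, then for any $(A_i,B_i,C_i,D_i)\in \Sigma_i^f$ we have $X_i^f = A_i X_i^p + B_i U_i^p$, so setting $K_i \doteq U_i^p (X_i^p)^\#$ yields
\[
A_i + B_i K_i = (A_i X_i^p + B_i U_i^p)(X_i^p)^\# = X_i^f (X_i^p)^\#,
\]
which is Schur irrespective of which compatible pair generated the data. For the ``only if'' direction, I would argue as in \cite{vanWaarde_CS_MAG2023}: if the data are informative, then any stabilizing $K_i$ must stabilize $A_i+B_iK_i$ for all compatible $(A_i,B_i)$, which in particular forces the quantity $A_i+B_iK_i$ to be uniquely determined by the data; this in turn forces $X_i^p$ to be of full row rank, and the stabilizer admits the representation $K_i = U_i^p (X_i^p)^\#$ for a suitable right inverse.

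Next I would prove $ii)\Leftrightarrow iii)$. Assuming $X_i^p$ is of full row rank, invoke Lemma \ref{lem:ParamLeadersCase1} to parametrize every right inverse as
\[
(X_i^p)^\# = (X_i^p)^\dag + \bigl(I_T - (X_i^p)^\dag X_i^p\bigr) Q_i,
\]
for some matrix $Q_i$. Multiplying on the left by $X_i^f$ gives
\[
X_i^f (X_i^p)^\# = X_i^f (X_i^p)^\dag + X_i^f\bigl(I_T - (X_i^p)^\dag X_i^p\bigr) Q_i,
\]
so the existence of $(X_i^p)^\#$ making $X_i^f(X_i^p)^\#$ Schur is equivalent to the existence of a ``feedback'' $Q_i$ that stabilizes the pair $\bigl(X_i^f(X_i^p)^\dag,\, X_i^f(I_T-(X_i^p)^\dag X_i^p)\bigr)$, which is precisely the stabilizability condition in $iii)$. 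The equivalence $iii)\Leftrightarrow iv)$ is then immediate from the PBH stabilizability test \cite{SontagBook} applied to this pair.

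The only genuinely delicate step is $i)\Leftrightarrow ii)$, which already appears in \cite{vanWaarde_CS_MAG2023}; the rest is essentially the leader-case derivation of Proposition \ref{prop:ParamLeadersCase2} with $\Psi_i$ replaced by $X_i^p$, and hence is, as the authors note, considerably simpler. Since the paper defers the details due to space, my writeup would simply point to these two observations: the van Waarde characterization for $i)\Leftrightarrow ii)$, and the right-inverse parametrization plus PBH for $ii)\Leftrightarrow iii)\Leftrightarrow iv)$.
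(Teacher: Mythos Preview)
Your proposal is correct and mirrors exactly the approach the paper indicates: it omits the proof but states that it is ``similar to, but much simpler than'' that of Proposition~\ref{prop:ParamLeadersCase2}, and your sketch reproduces precisely that blueprint with $X_i^p$ playing the role of $\Psi_i$. The only cosmetic remark is that invoking Lemma~\ref{lem:ParamLeadersCase1} is slight overkill here, since the parametrization $(X_i^p)^\# = (X_i^p)^\dag + (I_T-(X_i^p)^\dag X_i^p)Q_i$ of the right inverses of a full row rank matrix is standard and does not require the block structure that lemma is designed to handle.
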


Note that corresponding to every $(X_i^p)^\#$ satisfying $ii)$, we obtain the
 stabilizing  feedback gain  $K_i = U_i^p (X_i^p)^\#$ (see \cite{vanWaarde_CS_MAG2023}). 
The following result is the analogous, for followers,  of Proposition \ref{prop.leader.EqnsDD2} and hence its proof is omitted.

\begin{proposition} \label{prop:ParamFollowersCase2}
For every $i\in[N_l+1,N]$, given the data $(U_i^p,X_i^p,X_i^f,Y_i^p)$,
the following conditions are equivalent: 
    \begin{itemize}
        \item[1)] There exist matrices $\Pi_i$ and $\Gamma_i$ such that \eqref{eq.follower_cond_MBA} hold for all quadruples $(A_i, B_i, C_i, D_i) \in \Sigma_i^f$.
        \item[2)] There exists a matrix $M_i$ such that 
\begin{subequations}\label{eq.follower_cond_DD2}
        \begin{align}
            X_i^fM_i &=X_i^pM_iS,\label{eq.follower_cond_1_DD2}\\
            Y_i^pM_i&=R.\label{eq.follower_cond_2_DD2}
            \end{align}
             \end{subequations}
        \end{itemize}
\end{proposition}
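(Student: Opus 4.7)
The plan is to mirror the proof of Proposition \ref{prop.leader.EqnsDD2}, adapting it to the simpler follower setting in which the data matrix $Y_0^p$ and the coupling matrices $E_i, F_i$ are absent, so that equation \eqref{eq.leader_cond_3_DD2} has no counterpart here. The two implications can be treated separately, exploiting the bijective correspondence between quadruples in $\Sigma_i^f$ and perturbations in $\Sigma_i^{0,f}$.

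For $1) \Rightarrow 2)$, let $(A_i,B_i,C_i,D_i)$ denote the actual quadruple that produced the data, so that $(A_i,B_i,C_i,D_i) \in \Sigma_i^f$. Any other quadruple $(\hat A_i,\hat B_i,\hat C_i,\hat D_i)\in \Sigma_i^f$ differs from it by an element of $\Sigma_i^{0,f}$. If \eqref{eq.follower_cond_MBA} holds for all compatible quadruples, subtracting the equations written for $(A_i,B_i,C_i,D_i)$ and for $(\hat A_i,\hat B_i,\hat C_i,\hat D_i)$ forces
\begin{align*}
\begin{bmatrix} A_i-\hat A_i & B_i-\hat B_i \\ C_i-\hat C_i & D_i-\hat D_i\end{bmatrix}\begin{bmatrix} \Pi_i\\ \Gamma_i\end{bmatrix}=\begin{bmatrix}\mathbb{0}\\ \mathbb{0}\end{bmatrix}
\end{align*}
for every element of $\Sigma_i^{0,f}$. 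I would then deduce the left kernel inclusion
$\mathrm{ker}_L\!\left(\begin{bmatrix}X_i^p\\ U_i^p\end{bmatrix}\right)\subseteq \mathrm{ker}_L\!\left(\begin{bmatrix}\Pi_i\\ \Gamma_i\end{bmatrix}\right)$, which yields a matrix $M_i$ with $X_i^pM_i=\Pi_i$ and $U_i^pM_i=\Gamma_i$. Substituting these identities in \eqref{eq.follower_cond_1_MBA} and using $A_iX_i^p+B_iU_i^p=X_i^f$ produces \eqref{eq.follower_cond_1_DD2}, and the same replacement in \eqref{eq.follower_cond_2_MBA}, combined with $C_iX_i^p+D_iU_i^p=Y_i^p$, yields \eqref{eq.follower_cond_2_DD2}.

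For $2) \Rightarrow 1)$, I would simply define $\Pi_i\doteq X_i^pM_i$ and $\Gamma_i\doteq U_i^pM_i$ and verify that \eqref{eq.follower_cond_MBA} is satisfied for every $(A_i,B_i,C_i,D_i)\in\Sigma_i^f$. Indeed, for any such quadruple one has $(A_iX_i^p+B_iU_i^p)M_i=X_i^fM_i$ and $(C_iX_i^p+D_iU_i^p)M_i=Y_i^pM_i$, so \eqref{eq.follower_cond_1_DD2} immediately becomes $A_i\Pi_i+B_i\Gamma_i=\Pi_iS$ and \eqref{eq.follower_cond_2_DD2} becomes $C_i\Pi_i+D_i\Gamma_i=R$.

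No real obstacle is expected, since the argument is a direct specialization of the leader proof with $Y_0^p$, $E_i$, and $F_i$ removed; the only care needed is to observe that the absence of the known-output channel makes the third data equation \eqref{eq.leader_cond_3_DD2} unnecessary, because $R$ appears directly on the right-hand side of \eqref{eq.follower_cond_2_MBA} rather than being tied to $Y_0^pM_i$.
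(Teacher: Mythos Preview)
Your proposal is correct and is precisely the specialization of the leader-case proof that the paper intends: the authors explicitly omit the proof of Proposition~\ref{prop:ParamFollowersCase2}, stating only that it is ``the analogous, for followers, of Proposition~\ref{prop.leader.EqnsDD2}''. The only difference from the leader argument is indeed the absence of $Y_0^p$, $E_i$, $F_i$, so that \eqref{eq.leader_cond_3_DD2} drops out and $R$ appears directly on the right-hand side, exactly as you note.
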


\subsection{Data-driven solution}

By putting together  the  model-solution given in Theorem \ref{th:MBsolProb2}, and the characterizations given for leaders in 
Propositions  \ref{prop:ParamLeadersCase2} and \ref{prop.leader.EqnsDD2}, and for followers in 
Propositions \ref{prop.follower.stabDD2} and \ref{prop:ParamFollowersCase2}, we obtain the complete data-driven solution of Problem \ref{pb.2}.

\begin{theorem} \label{th:DDsolProb2}
 Consider the exosystem \eqref{eq.exo} and the MAS with leaders  described as in \eqref{eq.leader_i_V2}, $i\in [1,N_l]$, and  followers  described as in \eqref{eq.follower_i}, $i\in [N_l+1,N]$. Assume that Assumptions \ref{ass.exo_antistab}, \ref{ass.exo_obs},   \ref{ass.spanning} and \ref{ass.DD} hold.
 Problem \ref{pb.2} is solvable based on the families of collected data $(Y_0^p,U_i^p,X_i^p,X_i^f), i\in [1,N],$ iff  
 \begin{itemize}
  \item[i)] For each $i\in [1,N_l]$, 
  \begin{itemize}
      \item[ia)]
${\rm rank}( \Psi_i)= n_i + {\rm rank}(Y_0^p)$. 
     \item[ib)] 
      The pair $\left(X_i^f\Psi_i^\dag \begin{bmatrix} I_{n_i} \cr {\mathbb 0}\end{bmatrix},  X_i^f\left(I_{T}-\Psi_i^\dag \Psi_i\right)\right)$ is stabilizable.
     \item[ic)] There exists a matrix $M_i$ s.t. equations \eqref{eq.leader_cond_DD2}
     hold.
     \end{itemize}
     \item[ii)] For each $i\in [N_l+1,N]$, 
     \begin{itemize}
     \item[iia)] $X_i^p$ is of full row rank.
     \item[iib)] The pair $\left(
            X_i^f(X_i^p)^\dag,X_i^f(I_{T}-(X_i^p)^\dag X_i^p)
        \right)$ is stabilizable.
     \item[iic)] There exists a matrix $M_i$ s.t. equations \eqref{eq.follower_cond_DD2}
     hold.
     \end{itemize}
 \end{itemize}
\end{theorem}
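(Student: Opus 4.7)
The plan is to derive Theorem \ref{th:DDsolProb2} as a direct synthesis of the model-based characterization in Theorem \ref{th:MBsolProb2} together with the four data-driven translations already proved (Propositions \ref{prop.leader.stabDD2}, \ref{prop:ParamLeadersCase2}, \ref{prop.leader.EqnsDD2} for the leaders and Propositions \ref{prop.follower.stabDD2}, \ref{prop:ParamFollowersCase2} for the followers). The key conceptual observation is that, under Assumption \ref{ass.DD}, the ``true'' agent matrices are unknown and any sextuple in $\Sigma_i^l$ (resp.\ quadruple in $\Sigma_i^f$) is consistent with the data. Hence the controller designer must deliver matrices $K_i,\Pi_i,\Gamma_i$ that fulfil conditions c1)--c6), and thus the conditions i)--iii) of Theorem \ref{th:MBsolProb2}, simultaneously for \emph{every} system compatible with the data. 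Therefore solvability of Problem \ref{pb.2} in the data-driven set-up is equivalent to the \emph{informativity} version of each of the three model-based conditions.

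First I would handle the leader block. For each $i\in[1,N_l]$, stabilizability of $(A_i,B_i)$ for all compatible sextuples is, by definition, informativity for stabilization by state feedback; Proposition \ref{prop.leader.stabDD2} turns this into: $X_i^p$ has full row rank and there exists a right inverse $(X_i^p)^\#$ with $X_i^f(X_i^p)^\#$ Schur and $Y_0^p(X_i^p)^\#={\mathbb 0}$. The second constraint $Y_0^p(X_i^p)^\#={\mathbb 0}$ is precisely $\Psi_i(X_i^p)^\#=\begin{bmatrix}I_{n_i}\cr{\mathbb 0}\end{bmatrix}$, which (as recalled before \eqref{eq.defPsii2}) requires the rank condition ia). Once ia) holds, Proposition \ref{prop:ParamLeadersCase2} equates the existence of a right inverse fulfilling jointly the Schur and zeroing conditions with the stabilizability of the pair in ib). Finally, the model-based leader Sylvester system \eqref{eq.leader_cond_MBA} must hold for every sextuple in $\Sigma_i^l$; Proposition \ref{prop.leader.EqnsDD2} shows this is equivalent to the existence of $M_i$ satisfying \eqref{eq.leader_cond_DD2}, giving exactly ic).

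For each $i\in[N_l+1,N]$ I would proceed symmetrically, but more simply. Informativity for stabilization by state feedback of the follower data is translated by Proposition \ref{prop.follower.stabDD2} into iia) and iib). The Sylvester system \eqref{eq.follower_cond_MBA} for every quadruple in $\Sigma_i^f$ is in turn equivalent, by Proposition \ref{prop:ParamFollowersCase2}, to the existence of $M_i$ verifying \eqref{eq.follower_cond_DD2}, which is iic). It is worth stressing that the stabilizing feedback gain and the matrices $\Pi_i,\Gamma_i$ are then recovered as $K_i=U_i^p(X_i^p)^\#$, $\Pi_i=X_i^pM_i$, $\Gamma_i=U_i^pM_i$, while $L$ and $H$ are chosen exactly as in the proof of Theorem \ref{th:MBsolProb2} (the exosystem matrices $S,R$ and the graph Laplacian are known by Assumption \ref{ass.DD}, so those two gains need not be designed from data).

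Putting the two blocks together yields the bidirectional implication stated in the theorem. I do not expect a genuine technical obstacle here: the substantive work has already been done in the four propositions, and what remains is essentially a bookkeeping argument showing that ``for all compatible systems'' on the model side matches ``there exists an $M_i$ / a right inverse / $\ldots$'' on the data side, item by item. The only subtle point to articulate carefully is that conditions c2) and c5) of the model-based analysis do \emph{not} appear in the data-driven list because, by Assumption \ref{ass.DD}, $S$ and $R$ are known and hence $L$ and $H$ can always be designed directly from the exosystem model as in the proof of Theorem \ref{th:MBsolProb2}; this should be highlighted explicitly to justify why the conditions i) and ii) of the statement are both necessary and sufficient.
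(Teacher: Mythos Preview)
Your proposal is correct and mirrors exactly the paper's own treatment: the paper does not give a formal proof of Theorem \ref{th:DDsolProb2} but states it as the direct combination of Theorem \ref{th:MBsolProb2} with Propositions \ref{prop.leader.stabDD2}, \ref{prop:ParamLeadersCase2}, \ref{prop.leader.EqnsDD2} (leaders) and Propositions \ref{prop.follower.stabDD2}, \ref{prop:ParamFollowersCase2} (followers). Your additional remarks on recovering $K_i,\Pi_i,\Gamma_i$ and on why c2) and c5) disappear (since $S,R$ are known by Assumption \ref{ass.DD}) are apt and make the bookkeeping explicit; just be careful with the wording in the leader paragraph: informativity is not ``stabilizability of $(A_i,B_i)$ for all compatible sextuples'' but the existence of a \emph{single} $K_i$ stabilizing all of them, which you correctly state earlier.
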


\begin{example}
Consider an exosystem and a group of 5 agents (2 leaders and 3 followers) connected through the binary (i.e. $[\mathcal{A}]_{ij}$ is either $0$ or $1$) digraph $\mathcal{G}_0$ depicted in Figure \ref{fig.G0} and with dynamics described by the following matrices:

\begin{footnotesize}
\begin{align*}
S&=\begin{bmatrix}
   \sin(0.2)&\cos(0.2)\\
   -\cos(0.2) &\sin(0.2)
\end{bmatrix},\ R=\begin{bmatrix}
    -1&	1
\end{bmatrix},\\
A_1&=\begin{bmatrix}
0&	1&	-1\\
1&	0&	1\\
2&	0&	1
\end{bmatrix},\ B_1=\begin{bmatrix}
1&	0&	0\\
0&	-1&	1\\
1&	0&	2
\end{bmatrix},\ E_1=\begin{bmatrix}
2\\
-1\\
1
\end{bmatrix},\ B_4=\begin{bmatrix}
5\\
5\\
5
\end{bmatrix},\\ 
C_1&=\begin{bmatrix}
    1&	0&	1
\end{bmatrix},\ D_1=\begin{bmatrix}
    1&	2&	3
\end{bmatrix},\ 
C_2=\begin{bmatrix}
    2	&1
\end{bmatrix},\
C_3=\begin{bmatrix}
    1&	1
\end{bmatrix},\\
A_2&=\begin{bmatrix}
0&	2\\
0&	3
\end{bmatrix},\ B_2=\begin{bmatrix}
1\\
1
\end{bmatrix},\ E_2=\begin{bmatrix}
2\\
-1
\end{bmatrix},\
A_3=\begin{bmatrix}
1&	1\\
10&	3
\end{bmatrix},\ B_3=\begin{bmatrix}
3\\
2
\end{bmatrix},\\
A_4&=\begin{bmatrix}
2&	1&	4\\
1	&3&	5\\
0	&0	&4
\end{bmatrix},\ A_5=\begin{bmatrix}
2&	1&	3\\
1&	2&	4\\
0&	0&	4
\end{bmatrix},\ B_5=\begin{bmatrix}
1&	3&	1\\
5&	-3&	6\\
0&	5&	-1
\end{bmatrix},\\
C_4&=\begin{bmatrix}
    1&	2&	3
\end{bmatrix},\ C_5=\begin{bmatrix}
    1&	2&	1
\end{bmatrix},\ D_5=\begin{bmatrix}
3&	6&	-1
\end{bmatrix},\\
F_1&=\begin{bmatrix}
    5
\end{bmatrix},\ D_2=\begin{bmatrix}
3
\end{bmatrix},\ F_2=\begin{bmatrix}
    3
\end{bmatrix},\ D_3=\begin{bmatrix}
6
\end{bmatrix},\ D_4=\begin{bmatrix}
3
\end{bmatrix}.
\end{align*}
\end{footnotesize}

We set $T = 6$. The  inputs $u_i(t), t\in [0,5]$, and the initial states $x_i(0)$ and $x_0(0)$, $i\in [1,5]$,
 have been randomly generated from a standard Gaussian distribution.  We have collected the corresponding
data matrices $(Y_0^p,U_i^p,X_i^p,X_i^f,Y_i^p)$. By relying on the previous analysis, we have obtained the following matrices:

\begin{footnotesize}
    \begin{align*}
    L&=\begin{bmatrix}
    -0.5719\\
   -0.4692
        \end{bmatrix},\
        H=\begin{bmatrix}
    0.1987\\
   -0.9801
        \end{bmatrix},\\
        K_1&=\begin{bmatrix}
    0.7908  &  0.1046  &  0.5590\\
   -0.1677   & 0.2658 &   0.0935\\
   -1.3346 &   0.0327 &  -0.8135
        \end{bmatrix},\\
        K_2&=\begin{bmatrix}
            -0.0001  & -2.8999
        \end{bmatrix},\
        K_3=\begin{bmatrix}
            -1.0303 &  -0.5076
        \end{bmatrix},\\
        K_4&=\begin{bmatrix}
            -2.4279  &  0.7161 &  -0.0281
        \end{bmatrix},\\
        K_5&=\begin{bmatrix}
    3.5372 &   1.1530 &  -1.7219\\
   -0.6923  & -0.2701 &  -0.5844\\
   -3.4587 &  -1.3458 &   0.4763
        \end{bmatrix},\
        \Pi_1=\begin{bmatrix}  
    9.4737 &  -0.7312\\
   -0.8750 &   3.8708\\
    0.0693 &  -3.3919
        \end{bmatrix}\\
         \Pi_2&=\begin{bmatrix}
            0.3327&    3.4521\\
        -1.0572   & 1.1880
        \end{bmatrix},\
         \Pi_3=\begin{bmatrix}
    0.0399 &   0.2869\\
    0.4135 &  -1.0947
        \end{bmatrix},\\
         \Pi_4&=\begin{bmatrix}
    -0.7908 &   0.3916\\
    0.6203  &  1.4994\\
   -2.8368  & -1.0040
        \end{bmatrix},\
         \Pi_5=\begin{bmatrix}
    0.2158 &   0.0351\\
   -0.4961 &   0.1923\\
    0.1232 &  -0.1110
        \end{bmatrix},\\
         \Gamma_1&=\begin{bmatrix}
    5.5430  & -0.1230\\
    4.3996 &  -3.3488\\
  -10.1109 &   1.6856
        \end{bmatrix},\  \Gamma_5=\begin{bmatrix}
    0.0329  &  0.0156\\
   -0.0861  &  0.1020\\
   -0.0710  & -0.0326
   \end{bmatrix},\\
        \Gamma_2&=\begin{bmatrix}
    0.7972 &  -3.3640
        \end{bmatrix},\
        \Gamma_3=\begin{bmatrix}
   -0.2422  &  0.3013
        \end{bmatrix},\\
        \Gamma_4&=\begin{bmatrix}
    2.3536  &  0.2073
        \end{bmatrix}.       
    \end{align*}
\end{footnotesize}

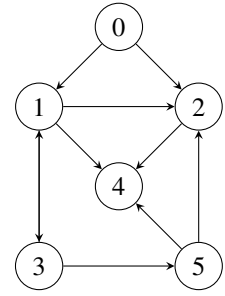
\begin{figure}[ht]
\centering
\begin{tikzpicture}[>=stealth, node distance=1.5cm, every node/.style={draw, circle}]
     \node (0) {0};
    \node (1) [below left of=0] {1};
    \node (2) [below right of=0] {2};
    \node (4) [below right of=1] {4};
    \node (3) [below left of=4] {3};
    \node (5) [below right of=4] {5};
    
    \draw[->] (0) -- (1);
    \draw[->] (0) -- (2);
    \draw[->] (1) -- (2);
    \draw[->] (1) -- (3);
    \draw[->] (1) -- (4);
    \draw[->] (2) -- (4);
    \draw[->] (3) -- (1);
    \draw[->] (3) -- (5);
    \draw[->] (5) -- (2);
    \draw[->] (5) -- (4);
\end{tikzpicture}
\caption{Graph $\mathcal{G}_0$.\label{fig.G0}}
\end{figure}
We have then tested the proposed data driven solution, randomly generating from a standard Gaussian distribution the initial condition $x_i(0)$, $i=[1,5]$, and $x_0(0)$, and the results shown in Figure \ref{fg.plot} highlight the excellent performance of the output synchronization algorithm.

\begin{figure}[ht]
\centering
\subfloat{
  \includegraphics[width=41mm]{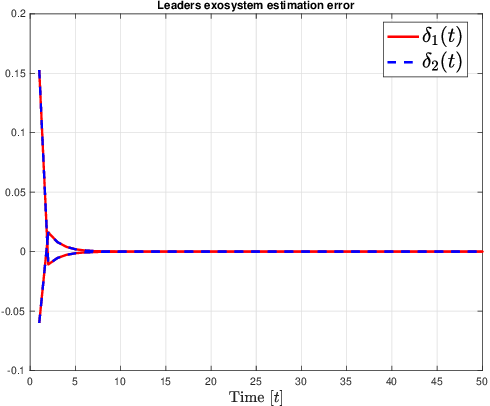}
}
\subfloat{
  \includegraphics[width=41mm]{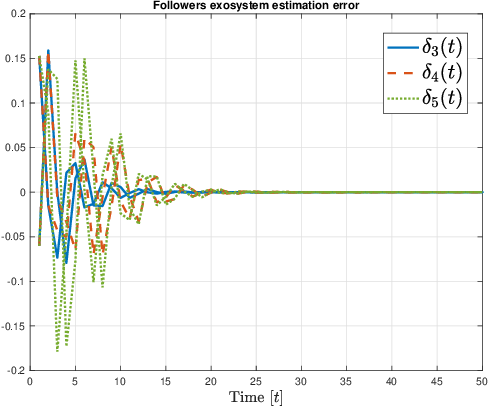}
}
\vspace{3mm}
\subfloat{
  \includegraphics[width=41mm]{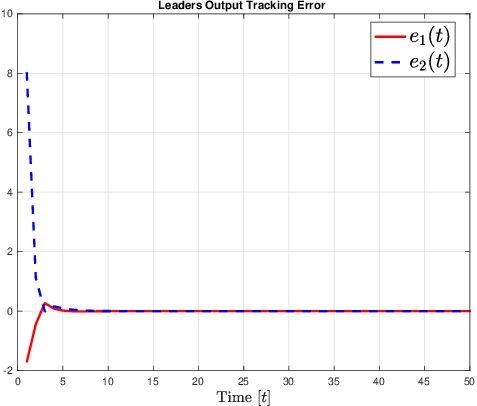}
}
\subfloat{
  \includegraphics[width=41mm]{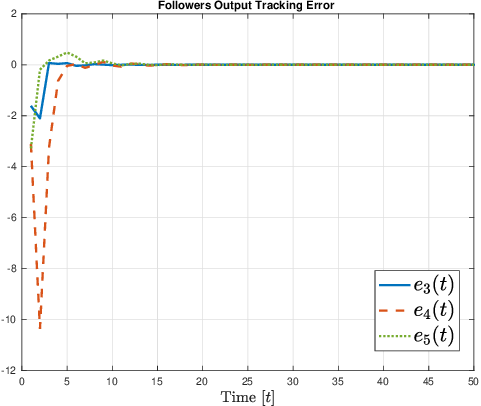}
}
\caption{Plot of the leaders (left) and followers (right) estimation error $\delta_i(t)$ (top) and output tracking error $e_i(t)$ (bottom).}
\label{fg.plot}
\end{figure}
\end{example}

\section{Conclusions}\label{sec.Conclusions}
 
In this paper the output synchronization problem for a discrete-time heterogeneous MAS is solved by relying only on data.  In order to synchronize their outputs with the one of exosystem, each agent implements a state-feedback control strategy making use of its own state and of its estimate of the exosystem state. 
The agents of the network split into two categories, based on whether they have direct access to the exosystem output or not:  leaders and   followers. The former can generate the exosystem state estimate by using a  Luenberger observer, while the others achieve that goal by exchanging information with their neighbors regarding the exosystem output.
Necessary and sufficient conditions for the existence of a solution using only collected data are derived, together with a (partial) parametrization of the problem solutions. It is shown that under suitable assumptions on the informativity of the collected data, the data-driven solution is equivalent to the model-based one. An illustrative example concludes the paper showing the soundness of the proposed method.

\appendix
{
\section{Appendix}
\begin{lemma}\label{lem.eig_Lap}
Under Assumption \ref{ass.spanning}, each eigenvalue  $\lambda$ of $\sigma((I_{N_f}+\mathcal{D}_f)^{-1}\mathcal{L}_{ff})$ 
(see \eqref{eq.Lap}) satisfies 
the following  conditions: a) $\lambda \ne 0$;
b)
$\lambda\in \{z\in {\mathbb C}: |z-1| < 1\}$.
\end{lemma}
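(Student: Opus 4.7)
The plan is to establish (b) first; condition (a) then comes for free, since $|0-1|=1\not<1$. Let $M \doteq (I_{N_f}+\mathcal{D}_f)^{-1}\mathcal{L}_{ff}$. Writing $\mathcal{L}_{ff}=\mathcal{D}_f-\mathcal{A}_{ff}$, where $\mathcal{A}_{ff}$ denotes the follower-to-follower block of the adjacency matrix $\mathcal{A}$, a direct computation yields
$$I_{N_f}-M=(I_{N_f}+\mathcal{D}_f)^{-1}(I_{N_f}+\mathcal{A}_{ff}),$$
which is entrywise nonnegative. Its $i$-th row sum equals $\frac{1+\sum_{j\in[N_l+1,N]}[\mathcal{A}]_{i,j}}{1+d_i}\le 1$, with equality iff $\sum_{j\in[1,N_l]}[\mathcal{A}]_{i,j}=0$, i.e., iff follower $i$ has no leader in-neighbor in $\mathcal{G}$. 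Hence $P\doteq I_{N_f}-M$ is substochastic, and (b) is equivalent to $\rho(P)<1$.

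The next step is to invoke the standard reachability criterion for substochastic nonnegative matrices: viewing $P$ as a sub-Markov transition matrix on the followers, with a fictitious sink absorbing the leaked mass $1-\sum_j P_{ij}$, one has $\rho(P)<1$ iff from every state there is a directed path in the graph associated with $P$ to some state whose row sum is strictly less than 1. Since $P_{i,j}>0$ for $i\neq j$ iff $[\mathcal{A}_{ff}]_{i,j}>0$, i.e., iff $(j,i)\in\mathcal{E}$ with both endpoints followers, a path in the graph of $P$ from $i$ to $j$ corresponds, upon reversal, to a path $j\to\cdots\to i$ in $\mathcal{G}$ consisting entirely of follower-to-follower edges.

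The required reachability is produced by Assumption \ref{ass.spanning}. For each follower $i$, pick the spanning-tree path $0\to v_1\to v_2\to\cdots\to v_k=i$ in $\mathcal{G}_0$. Since $0$'s only out-neighbors are leaders, $v_1$ is a leader; since $v_k$ is a follower, there exists a largest index $t\in\{1,\dots,k-1\}$ with $v_t$ a leader. By maximality, $v_{t+1},\dots,v_k$ are all followers, and $v_{t+1}$ admits the leader $v_t$ as an in-neighbor. Reversing the sub-path $v_{t+1}\to\cdots\to v_k=i$ yields a path in the graph of $P$ from $i$ to $v_{t+1}$, a row of $P$ with strictly substochastic sum. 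The criterion then gives $\rho(P)<1$, hence (b), and (a) follows at once.

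I expect the main subtlety to lie in the combinatorial step: extracting the \emph{last} leader on the spanning-tree path so that its tail remains inside the follower set and can be translated faithfully into the graph of $P$. The remaining ingredients---the algebraic rewriting of $I-M$, and the Perron/Gershgorin-style fact that strict contractivity of a nonnegative substochastic matrix is equivalent to reachability of leaky rows---are standard and require little additional work.
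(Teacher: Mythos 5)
Your proof is correct, but it follows a genuinely different route from the paper's. The paper proves a) first, by condensing all leader nodes into the exosystem node and invoking Lemma 1 of \cite{Su_TAC2012} to conclude that $\mathcal{L}_{ff}$ is nonsingular; it then proves b) via Gershgorin's circle theorem, which places each eigenvalue in a closed disc $\left|z-\tfrac{d_j}{1+d_j}\right|\le\tfrac{d_j}{1+d_j}$ that meets the boundary circle $|z-1|=1$ only at the origin, so that a) is needed to upgrade the closed disc to the open one. You instead prove b) directly by exhibiting $I-M=(I_{N_f}+\mathcal{D}_f)^{-1}(I_{N_f}+\mathcal{A}_{ff})$ as a nonnegative substochastic matrix whose deficient rows are exactly the followers with a leader in-neighbor, and then use Assumption \ref{ass.spanning} --- via the ``last leader on the spanning-tree path'' extraction, which is the one step requiring care and which you handle correctly --- to verify the reachability condition of the absorbing-chain criterion, obtaining $\rho(I-M)<1$ and hence both b) and a) in one stroke. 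Your algebraic identities, the row-sum computation, and the edge-direction bookkeeping (paths in the digraph of $P$ correspond to reversed follower-to-follower paths in $\mathcal{G}$) all check out. The trade-off: your argument is self-contained up to the standard sub-Markov reachability fact and makes the role of the spanning-tree assumption completely transparent, at the cost of stating that criterion without proof; the paper's argument outsources the combinatorial content to the cited result of Su and Huang and gets the spectral localization almost for free from Gershgorin, but must establish a) and b) separately and chain them. Both ultimately rest on the same structural fact that every follower is reachable from the leader set.
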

\begin{proof}
a) Consider the condensed graph ${\mathcal G}_c$, obtained from ${\mathcal G}_0$ by merging all leader nodes with the node $0$ representing the exosystem. There is an edge from this new node to any of the follower nodes iff there was an edge from one of the leaders to that follower in ${\mathcal G}_0$. Conversely, there is an edge from any follower to the new node iff there was an edge from that follower to one of the leaders in ${\mathcal G}_0$.
It is easy to see that the Laplacian associated with this new digraph is related to 
${\mathcal L}_0$ in \eqref{eq.Lap} as follows:
$${\mathcal L}_c = \begin{bmatrix}
    -N_\ell  + {\mathbb 1}_{N_\ell }^\top {\mathcal L}_{\ell \ell }{\mathbb 1}_{N_\ell } & {\mathbb 1}_{N_\ell }^\top {\mathcal L}_{\ell f}\cr
    {\mathcal L}_{f\ell }{\mathbb 1}_{N_\ell } & {\mathcal L}_{ff}.
\end{bmatrix}$$
 Assumption \ref{ass.spanning}, introduced for ${\mathcal G}_0$, still holds for ${\mathcal G}_c$. But then we can resort to Lemma 1 in \cite{Su_TAC2012} to claim that ${\mathcal L}_{ff}$   is nonsingular square.
As $(I+ {\mathcal D}_f)^{-1}$ is nonsingular, too, then 
a) holds.
 
 b)  By Gershgorin's Circle theorem \cite{HornJohnson}, we can say that
        $\sigma((I_{N_f}+\mathcal{D}_f)^{-1}\mathcal{L}_{ff})$
        is included in $\bigcup_{j\in[N_\ell+1,N]}\left\{z\in\mathbb{C}\ : \ \left|z-\frac{d_j}{1+d_j}\right|\leq\sum_{k=N_\ell+1}^N\frac{[\mathcal{A}]_{jk}}{1+d_j}\right\},$
    but
        $\sum_{k=N_\ell+1}^N [\mathcal{A}]_{jk}\leq d_{j}$,
    and hence for all $j\in[N_\ell+1,N]$ and every $\lambda \in \sigma((I_{N_f}+\mathcal{D}_f)^{-1}\mathcal{L}_{ff})$: 
    \begin{align*}
        \left|\lambda-\frac{d_{j}}{1+d_j}\right|\leq\frac{d_{j}}{1+d_j}
    \end{align*}
    that, together with a), implies b).
\end{proof}

\begin{lemma}\label{lem.stab_2}
Set $\sigma((I_{N_f}+\mathcal{D}_f)^{-1}\mathcal{L}_{ff}) = \{\lambda_1, 
\dots, \lambda_{N_f}\}$. Under Assumptions \ref{ass.exo_antistab}, \ref{ass.exo_obs}
 and \ref{ass.spanning}, there exists a matrix $H$ such that $S-\lambda_k HR$ is Schur stable for all  $k\in [1,N_f]$.
 \end{lemma}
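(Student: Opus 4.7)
The plan is to construct a single matrix $H$ via a Riccati-based design and then verify, with a common Lyapunov function, that $S-\lambda_k HR$ is Schur for every $k\in[1,N_f]$.

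First, I would harvest the geometric information provided by Lemma \ref{lem.eig_Lap}: each $\lambda_k$ lies inside the open disk $\{z\in\mathbb{C}:|z-1|<1\}$, so $\rho\doteq\max_k|\lambda_k-1|$ is strictly less than $1$ and ${\rm Re}(\lambda_k)>0$ for every $k$. The strict disk bound $|\lambda_k-1|^2\le\rho^2<1$ is the key quantity that will ultimately drive the Lyapunov estimate. At the same time, Assumptions \ref{ass.exo_antistab} and \ref{ass.exo_obs} guarantee that $\sigma(S)$ lies on the unit circle and that the dual pair $(S^T,R^T)$ is reachable, so the discrete-time LQR machinery applies to $(S,R)$.

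Second, I would introduce the parametric discrete algebraic Riccati equation
\[
P_\alpha = S P_\alpha S^T - \alpha\, S P_\alpha R^T (I+\alpha R P_\alpha R^T)^{-1} R P_\alpha S^T + I,\qquad \alpha>0,
\]
which under the above hypotheses admits a unique positive definite solution $P_\alpha$, and define the candidate gain $H_\alpha \doteq \alpha\, S P_\alpha R^T (I+\alpha R P_\alpha R^T)^{-1}$. The standard LQR identity then yields $(S-H_\alpha R) P_\alpha (S-H_\alpha R)^T = P_\alpha - I - \alpha^{-1} H_\alpha H_\alpha^T$, which already certifies that $S-H_\alpha R$ is Schur with $P_\alpha$ as a (right-acting) Lyapunov matrix for every $\alpha>0$.

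Third, I would take $P_\alpha$ as a common Hermitian Lyapunov matrix and expand $(S-\lambda H_\alpha R) P_\alpha (S-\lambda H_\alpha R)^*$ by substituting the identity just obtained. With the shorthands $U=S P_\alpha R^T$, $W=R P_\alpha R^T$ and $T=(I+\alpha W)^{-1}$, a direct (if tedious) calculation gives
\[
(S-\lambda H_\alpha R) P_\alpha (S-\lambda H_\alpha R)^* - P_\alpha = -I + UT\bigl\{\alpha[1-2\,{\rm Re}(\lambda)]\,I + \alpha^{2}|\lambda-1|^2\,W\bigr\}TU^T.
\]
The only $\lambda$-dependent ingredients inside the bracket are the real scalars $1-2\,{\rm Re}(\lambda)$ and $|\lambda-1|^2$, and the elementary inequality $1-2\,{\rm Re}(\lambda)\le|\lambda-1|^2$ (valid for every $\lambda\in\mathbb{C}$) allows me to dominate the bracket by $\alpha|\lambda-1|^2 T^{-1}$, so that $(S-\lambda H_\alpha R) P_\alpha (S-\lambda H_\alpha R)^* - P_\alpha \le -I + \alpha|\lambda-1|^2\, U T U^T$.

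The main obstacle, and the step I expect to be delicate, is to pick a single $\alpha>0$ such that $\alpha\rho^{2}\, U T U^T<I$ holds (note that $P_\alpha$, and hence $U$ and $T$, themselves depend on $\alpha$). My plan is to exploit the cheap-control asymptotics as $\alpha\to\infty$: in that regime $\alpha T_\alpha\to(R P_\infty R^T)^{-1}$, so $\alpha U T_\alpha U^T$ converges to $S P_\infty R^T (R P_\infty R^T)^{-1} R P_\infty S^T$, and the DARE at the limit identifies this matrix with $S P_\infty S^T - P_\infty + I$. Combining this with the strict bound $|\lambda_k-1|\le\rho<1$ and the finiteness of the spectrum, a sufficiently large $\alpha$ will make the Lyapunov inequality $-I + \alpha\rho^{2}\, U T_\alpha U^T<0$ hold uniformly in $k$, delivering a single $H=H_\alpha$ that renders $S-\lambda_k H R$ Schur for every $k$. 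If scalar tuning turns out to be insufficient, the fallback is to enforce the inequalities through an LMI in $(H,P)$ whose feasibility again rests on the strict disk bound of Lemma \ref{lem.eig_Lap}.
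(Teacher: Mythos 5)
Your algebra through the Lyapunov identity is correct (I checked that $(S-\lambda H_\alpha R)P_\alpha(S-\lambda H_\alpha R)^*-P_\alpha=-I+UT\{\alpha[1-2\,{\rm Re}(\lambda)]I+\alpha^2|\lambda-1|^2W\}TU^T$ and that the bound via $1-2\,{\rm Re}(\lambda)\le|\lambda-1|^2$ gives $-I+\alpha|\lambda-1|^2UTU^T$), but the final step -- choosing $\alpha$ so that $\alpha\rho^2UT_\alpha U^T<I$ -- fails in general, and this is a genuine gap. From the DARE itself, $\alpha UT_\alpha U^T=SP_\alpha S^T-P_\alpha+I$, so this quantity does \emph{not} shrink as $\alpha\to\infty$: it converges to $SP_\infty S^T-P_\infty+I=SP_\infty R^T(RP_\infty R^T)^{-1}RP_\infty S^T$, a positive semidefinite matrix of rank at most $p$. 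Under Assumption 1 the relevant $S$ can be taken orthogonal (e.g.\ the rotation matrix of the paper's example), in which case ${\rm tr}(SP_\alpha S^T-P_\alpha+I)=n_0$ for every $\alpha$, so the largest eigenvalue of $\alpha UT_\alpha U^T$ is at least $n_0/p$. Your sufficient condition therefore requires $\rho^2<p/n_0$, whereas Lemma 1 only guarantees $\rho<1$; for $n_0=2$, $p=1$ and a graph with $\rho\in(1/\sqrt{2},1)$ your certificate provably cannot close. This is not an artifact of a loose bound: the exact $\alpha\to\infty$ limit of your expression is $-I+|\lambda-1|^2(SP_\infty S^T-P_\infty+I)$, which carries the same obstruction. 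The standard LQR gain simply does not reach the critical threshold $\rho<1$; this is precisely why the literature on this simultaneous-stabilization problem replaces the DARE by a \emph{modified} Riccati equation in which the correction term carries the factor $(1-\rho^2)$, whose solvability threshold is exactly $\rho<1/\prod_j|\mu_j^u(S)|=1$ here.

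For comparison, the paper's proof sidesteps the Riccati analysis entirely: it first uses the simplicity of $\sigma(S)$ and observability of $(R,S)$ to produce a vector $v$ with $(v^\top R,S)$ observable (a generic choice avoiding finitely many proper subspaces), thereby reducing to a single-output problem, and then invokes the dual of Theorem 3.2 of You and Xie, which delivers a $q$ with $S-\lambda_k qv^\top R$ Schur for all $k$ precisely under the condition $\max_k|1-\lambda_k|<1/\prod_j|\mu_j^u(S)|$ supplied by Lemma 1 and Assumption 1. If you want to keep a self-contained Riccati argument, you would need to reproduce that machinery: replace your DARE with $P=SPS^T+I-(1-\rho^2)\alpha SPR^T(I+\alpha RPR^T)^{-1}RPS^T$ and establish existence of its solution for $\rho<1$, which is the nontrivial content of the cited result; the plain cheap-control limit will not do it.
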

\begin{proof}
First, we want to prove that if $(R,S)$ is observable, then there exists a vector $v\ne \mathbb{0}$ such that $(v^\top R,S)$ is observable. Let $W$ be a nonsingular matrix s.t. $W^{-1}SW={\rm diag}(\mu_1,\dots,\mu_{n_0})$, with $|\mu_j|=1$, $\mu_i\ne\mu_j$ if $i\ne j$. It follows that $(R,S)$ is observable iff $(RW,W^{-1}SW)$ is observable, and by the PBH observability criterion this is the case iff $RW$ does not have  null columns. On the other hand, $(v^\top R,S )$ is observable iff $(v^\top RW,W^{-1}SW)$ is observable, which happens iff $v^\top RW$ does not have  null entries.\\
Let   $\bar w_i$ denote the $i$th column of $RW$, then $v^\top \bar w_i=0$ iff $v\in\left({\rm im}(\bar w_i)\right)^\perp$. Since $\bigcup_{i=1}^{n_0}\left({\rm im}(\bar w_i)\right)^\perp\subsetneqq	\mathbb{R}^p$, it follows that exists $v\in\mathbb{R}^p\setminus\left(\bigcup_{i=1}^{n_0}\left({\rm im}(\bar w_i)\right)^\perp\right)$, namely exists $v\ne\mathbb{0}$ such that $v^\top RW$ does not have null entries, and thus $(v^\top R,S)$ is observable.\\
Now it remains to show that if $(v^\top R,S)$ is observable then there exists $q\in {\mathbb R}^{n_0}$ such that $S-\lambda_k qv^\top R$ is Schur for all $k\in[1,N_f]$, but this is the dual result of \cite[Theorem 3.2]{You_TAC2011}. 
Note that we can apply such result because condition (17) in \cite{You_TAC2011} holds, due to Lemma \ref{lem.eig_Lap}.
So, the result holds for $H= q v^\top.$
\end{proof}

\begin{lemma} \label{lem:ParamLeadersCase1}
Let $A\in {\mathbb R}^{n\times r}$ and  $B\in {\mathbb R}^{k\times r}$, with ${\rm rank}(A)=n$ { and ${\rm rank}(B) = \bar k$}, and suppose that
$$\bar\Psi \doteq \begin{bmatrix}
    A\cr B\end{bmatrix}\in {\mathbb R}^{(n+k)\times r}
$$
has $ {\rm rank}(\bar\Psi)= n+ {\bar k}.$ 
A matrix $C\in {\mathbb R}^{r\times n}$
satisfies 
\begin{equation}\bar\Psi C =  \begin{bmatrix}
    I_n\cr {\mathbb 0}\end{bmatrix}
    \label{pippo1}
    \end{equation}
    if and only if  there exists $Q\in {\mathbb R}^{r \times (n+k)}$ such that
    \begin{equation}
    C = \left(\bar\Psi^\dag  + (I_r -
    \bar\Psi^\dag \bar\Psi) Q\right) \begin{bmatrix}
    I_n\cr {\mathbb 0}\end{bmatrix}.
    \label{pippo2}
    \end{equation}
\end{lemma}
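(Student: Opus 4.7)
This is the classical Moore--Penrose parametrization of the solution set of a consistent linear matrix equation, specialized to $\bar\Psi C = \begin{bmatrix} I_n \\ {\mathbb 0} \end{bmatrix}$. I would split the proof in two: first verify that every $C$ of the form \eqref{pippo2} satisfies \eqref{pippo1}, then show that every solution of \eqref{pippo1} can be written as in \eqref{pippo2} for a suitable $Q$.

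The preliminary ingredient is the same rank factorization already used in the proof of Proposition \ref{prop:ParamLeadersCase2}: writing $B = L_0 R_0$ with $L_0 \in {\mathbb R}^{k \times \bar k}$ of full column rank and $R_0 \in {\mathbb R}^{\bar k \times r}$ of full row rank, one has
\[ \bar\Psi = \begin{bmatrix} I_n & 0 \\ 0 & L_0 \end{bmatrix} \begin{bmatrix} A \\ R_0 \end{bmatrix}, \]
in which the left factor is of full column rank $n + \bar k$ and, since ${\rm rank}(\bar\Psi) = n + \bar k$ coincides with the number of rows of the right factor, the right factor is of full row rank. This is a rank factorization, so Greville's formula gives $\bar\Psi^\dag = \begin{bmatrix} A \\ R_0 \end{bmatrix}^\dag \begin{bmatrix} I_n & 0 \\ 0 & L_0 \end{bmatrix}^\dag$. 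A short block computation then yields $\bar\Psi\,\bar\Psi^\dag \begin{bmatrix} I_n \\ {\mathbb 0} \end{bmatrix} = \begin{bmatrix} I_n & 0 \\ 0 & L_0 L_0^\dag \end{bmatrix}\begin{bmatrix} I_n \\ {\mathbb 0} \end{bmatrix} = \begin{bmatrix} I_n \\ {\mathbb 0} \end{bmatrix}$, which simultaneously establishes that \eqref{pippo1} is consistent and that $C_0 \doteq \bar\Psi^\dag \begin{bmatrix} I_n \\ {\mathbb 0} \end{bmatrix}$ is one of its solutions.

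The direction \eqref{pippo2} $\Rightarrow$ \eqref{pippo1} is then immediate: left-multiplying \eqref{pippo2} by $\bar\Psi$ and invoking the identity $\bar\Psi\,\bar\Psi^\dag \bar\Psi = \bar\Psi$ (which makes the $(I_r - \bar\Psi^\dag\bar\Psi)Q$ term vanish) gives $\bar\Psi C = \bar\Psi\,\bar\Psi^\dag \begin{bmatrix} I_n \\ {\mathbb 0} \end{bmatrix} = \begin{bmatrix} I_n \\ {\mathbb 0} \end{bmatrix}$. For the converse, given any $C$ satisfying \eqref{pippo1}, I would note that $C - C_0 \in \ker(\bar\Psi) = {\rm im}(I_r - \bar\Psi^\dag\bar\Psi)$, hence $C - C_0 = (I_r - \bar\Psi^\dag\bar\Psi)Q'$ for some $Q' \in {\mathbb R}^{r \times n}$. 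To match the statement, whose free parameter $Q$ has $n + k$ columns rather than $n$, I would simply pad with zeros, taking $Q = [\,Q' \ \ {\mathbb 0}\,] \in {\mathbb R}^{r \times (n+k)}$, so that $Q \begin{bmatrix} I_n \\ {\mathbb 0} \end{bmatrix} = Q'$ and \eqref{pippo2} holds.

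The argument relies only on standard pseudoinverse identities and carries no conceptual difficulty. The one mildly delicate point to flag is the dimensional bookkeeping at the end: the natural free parameter of the pseudoinverse parametrization has $n$ columns, whereas the one in the stated formula has $n + k$; the post-multiplication by $\begin{bmatrix} I_n \\ {\mathbb 0} \end{bmatrix}$ selects only the first $n$ columns of $Q$, so the two parametrizations describe the same set, and the reconciliation via padding is what makes this explicit.
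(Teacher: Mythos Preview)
Your proof is correct. The rank-factorization setup and the forward direction \eqref{pippo2}$\Rightarrow$\eqref{pippo1} coincide with the paper's argument almost verbatim. For the converse, however, you take a shorter route: you invoke directly that $\ker(\bar\Psi)={\rm im}(I_r-\bar\Psi^\dag\bar\Psi)$, write $C-C_0=(I_r-\bar\Psi^\dag\bar\Psi)Q'$ with $Q'\in{\mathbb R}^{r\times n}$, and then pad $Q'$ with zero columns to obtain $Q\in{\mathbb R}^{r\times(n+k)}$. The paper instead reduces first to the full-row-rank block $\left[\begin{smallmatrix}A\\R_0\end{smallmatrix}\right]$, explicitly extends $C$ to a complete right inverse $[\,C\ \ R_0^\#\,]$ via the construction $R_0^\#\doteq R_0^\dag-CAR_0^\dag$, applies the parametrization of right inverses of a full-row-rank matrix to produce $Q$, and only then translates back to $\bar\Psi$ through the identities $\left[\begin{smallmatrix}A\\R_0\end{smallmatrix}\right]^\dag\left[\begin{smallmatrix}I_n\\0\end{smallmatrix}\right]=\bar\Psi^\dag\left[\begin{smallmatrix}I_n\\0\end{smallmatrix}\right]$ and $\left[\begin{smallmatrix}A\\R_0\end{smallmatrix}\right]^\dag\left[\begin{smallmatrix}A\\R_0\end{smallmatrix}\right]=\bar\Psi^\dag\bar\Psi$. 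Your argument is cleaner and avoids this detour; the paper's version is more constructive but still needs an implicit padding step of its own (its $Q$ naturally has $n+\bar k$ columns, not $n+k$). Both are valid.
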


\begin{proof}
Let $L$ and $R$ be a full column rank matrix and a full row rank matrix, respectively, so that
$B= L R$. Then
$$\bar \Psi = \begin{bmatrix} I_{n} & {\mathbb 0}\cr
{\mathbb 0} & L\end{bmatrix} 
\begin{bmatrix} A\cr
R\end{bmatrix}.$$
Note that in the previous factorization the matrix on the left is of full column rank, while the matrix on the right is of full row rank and such common rank coincides with
${\rm rank}( \bar\Psi)$.
Consequently, by the properties of the Moore-Penrose inverse \cite{Greville_SIAM66}, we can claim that
$$ \bar\Psi^\dag =  
\begin{bmatrix} A\cr
R\end{bmatrix}^\dag 
\begin{bmatrix} I_{n} & {\mathbb 0}\cr
{\mathbb 0} & L\end{bmatrix}^\dag.$$
 \\
Clearly, if \eqref{pippo2}  holds
for some $Q$, then
\begin{align*}
     \bar\Psi C 
&=  \bar\Psi \left[\bar\Psi^\dag+\left(I_{r}- \bar\Psi^\dag  
\bar\Psi\right)Q\right]\begin{bmatrix} I_{n}\cr {\mathbb 0}\end{bmatrix}\\
&=  \bar\Psi  \bar\Psi^\dag \begin{bmatrix} I_{n}\cr {\mathbb 0}\end{bmatrix}
= \begin{bmatrix} I_{n} & {\mathbb 0}\cr
{\mathbb 0} & LL^\dag \end{bmatrix}\begin{bmatrix} I_{n}\cr {\mathbb 0}\end{bmatrix} = \begin{bmatrix} I_{n}\cr {\mathbb 0}\end{bmatrix},
\end{align*}
which means that \eqref{pippo1} holds.\\
Conversely, 
suppose that \eqref{pippo1} holds. 
Since $L$ is of full column rank, this { implies} that 
$$\begin{bmatrix} A\cr
R\end{bmatrix} C = \begin{bmatrix} I_{n}\cr {\mathbb 0}\end{bmatrix}.$$
{ Since $R$ is of full row rank, its Moore-Penrose inverse $R^\dag$ is also a right inverse, i.e., $R R^\dag=I_{\bar k}$, and hence
$$\begin{bmatrix}
    A\cr R\end{bmatrix} \begin{bmatrix}
    C & R^\dag\end{bmatrix} = \begin{bmatrix}
    I_n & AR^\dag \cr {\mathbb 0} & I_{\bar k}\end{bmatrix}.$$
Consequently, upon setting $R^\# \doteq R^\dag - CA R^\dag$, we get
    $$ \begin{bmatrix}
    A\cr R\end{bmatrix} \begin{bmatrix}
    C & R^\#\end{bmatrix} = \begin{bmatrix}
    I_n & {\mathbb 0}\cr {\mathbb 0} & I_{\bar k}\end{bmatrix}.$$
    Since
    $\begin{bmatrix}
    C & R^\#\end{bmatrix}$
    is a right inverse of $\begin{bmatrix}
    A\cr R\end{bmatrix}$, there exists some matrix $Q$ such that
    $$\begin{bmatrix}
    C & R^\#\end{bmatrix} = \begin{bmatrix}
    A\cr R\end{bmatrix}^\dag + \left(I_{r} - \begin{bmatrix}
    A\cr R\end{bmatrix}^\dag \begin{bmatrix}
    A\cr R\end{bmatrix}\right) Q.$$}
    This implies that   
\begin{equation}\label{eq.param1_1}
C = 
\left(\begin{bmatrix}
    A\cr R\end{bmatrix}^\dag + \left(I_{r} - \begin{bmatrix}
    A\cr R\end{bmatrix}^\dag \begin{bmatrix}
    A\cr R\end{bmatrix}\right) Q\right)\begin{bmatrix} I_{n}\cr {\mathbb 0}\end{bmatrix},
\end{equation}
for some matrix $Q$.
It is easy to see that
$$\begin{bmatrix} A\cr
R\end{bmatrix}^\dag \begin{bmatrix} I_{n}\cr {\mathbb 0}\end{bmatrix} = 
\begin{bmatrix} A\cr
R\end{bmatrix}^\dag  \begin{bmatrix} I_{n} & {\mathbb 0}\cr
{\mathbb 0} & L\end{bmatrix}^\dag\begin{bmatrix} I_{n}\cr {\mathbb 0}\end{bmatrix} = \bar\Psi^\dag \begin{bmatrix} I_{n}\cr {\mathbb 0}\end{bmatrix}$$
as well as
$$\begin{bmatrix} A\cr
R\end{bmatrix}^\dag \begin{bmatrix} A\cr
R\end{bmatrix}= \begin{bmatrix} A\cr
R\end{bmatrix}^\dag \begin{bmatrix} I_{n} & {\mathbb 0}\cr
{\mathbb 0} & L\end{bmatrix}^\dag\begin{bmatrix} I_{n} & {\mathbb 0}\cr
{\mathbb 0} & L\end{bmatrix}\begin{bmatrix} A\cr
R\end{bmatrix} = \bar\Psi^\dag \bar\Psi.$$
This shows that \eqref{eq.param1_1}
coincides with \eqref{pippo2}, 
and hence $C$ can be written as in \eqref{pippo2} for some $Q$. 
\end{proof}
}

\bibliographystyle{plain}
\bibliography{biblio}

\end{document}